\tikzstyle{myArrowStyle} = [shorten >=1pt,->,>=stealth',semithick]
\newcommand{\Z}{\mathbb{Z}}
\newcommand{\Dist}{\operatorname{Dist}}
\newcommand{\topartial}{\dashrightarrow}
\newcommand{\Set}[2]{\left\{\,#1\,\mid\,#2\,\right\}}
\newcommand{\set}[1]{\left\{\,#1\,\right\}}
\newcommand{\pcfp}{\mathfrak{P}}
\newcommand{\pcfpinit}{(\Loc,\, \Var,\, \dom,\, \Cmd,\, \iota)}
\newcommand{\pcfpinitprime}{(\Loc',\, \Var',\, \dom',\, \Cmd',\, \iota')}
\newcommand{\Var}{\mathsf{Var}}
\newcommand{\dom}{\mathsf{dom}}
\newcommand{\Loc}{\mathsf{Loc}}
\newcommand{\Cmd}{\mathsf{Cmd}}
\newcommand{\asgn}{\alpha}
\newcommand{\lhs}{\mathsf{lhs}}
\newcommand{\rhs}{\mathsf{rhs}}
\newcommand{\goalpred}{\vartheta}
\newcommand{\pcfptrans}[5]{#1 \xrightarrow{#2\,\rightarrow\,#3 \colon #4} #5}
\newcommand{\mdp}{\mathcal{M}}
\newcommand{\mdpinit}{(S,\, \Act,\, \sinit,\, P)}
\newcommand{\Act}{\mathsf{Act}}
\newcommand{\sinit}{\iota}
\newcommand{\mdptrans}[4]{#1 \xrightarrow{#2,~#3} #4}
\newcommand{\update}{u}
\newcommand{\seq}{\fatsemi}
\newcommand{\altupdate}{v}
\newcommand{\nop}{\mathsf{nop}}
\newcommand{\val}{\nu}
\newcommand{\guard}{\varphi}
\newcommand{\altguard}{\psi}
\newcommand{\Asg}{\mathsf{Asgn}}
\renewcommand{\lhs}{\mathsf{lhs}}
\renewcommand{\rhs}{\mathsf{rhs}}
\newcommand{\Unf}{\mathsf{Unf}}
\renewcommand{\wp}{\mathsf{wp}}
\newcommand{\prob}{\mathbb{P}}
\newcommand{\reach}{\lozenge}
\newcommand{\depon}{\rightarrow}
\newcommand{\prism}{\textsf{PRISM}}
\newcommand{\storm}{\textsf{storm}}
\newcommand{\jani}{\textsf{jani}}
\newcommand{\mcsta}{\textsf{mcsta}}
\newcommand{\benchmark}[1]{\textsc{#1}}
\newcommand{\hoare}[3]{\set{#1}\,#2\,\set{#3}}
\newcommand\xqed[1]{%
  \leavevmode\unskip\penalty9999 \hbox{}\nobreak\hfill
  \quad\hbox{#1}}
\newcommand\qedExample{\xqed{\small $\triangle$}}
\newcommand\qedProof{\xqed{\small$\square$}}
\def\orcidID#1{\smash{\href{http://orcid.org/#1}{\protect\raisebox{-1.25pt}{\protect\includegraphics{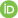}}}}}
\begin{document}
\title{Out of Control: Reducing Probabilistic Models by Control-State Elimination
    \thanks{This work is supported by the Research Training Group 2236 UnRAVeL, funded by the German Research Foundation.}
}
\titlerunning{Reducing Probabilistic Models by Control-State Elimination}
% If the paper title is too long for the running head, you can set
% an abbreviated paper title here
%
\author{Tobias Winkler\textsuperscript{(\Letter)}\orcidID{0000-0003-1084-6408} \and Johannes Lehmann\orcidID{0000-0001-7047-3813} \and \\
    Joost-Pieter Katoen\orcidID{0000-0002-6143-1926}}
\authorrunning{T. Winkler, J. Lehmann, and J.-P. Katoen}
% First names are abbreviated in the running head.
% If there are more than two authors, 'et al.' is used.
%
\institute{RWTH Aachen University (Germany) \\
\email{tobias.winkler@cs.rwth-aachen.de} \\
\email{johannes.lehmann@rwth-aachen.de} \\ 
\email{katoen@cs.rwth-aachen.de}}
\maketitle            % typeset the header of the contribution

\begin{abstract}
%	We present a new technique to reduce state space sizes in probabilistic model checking when the input model is defined a in programming formalism like the PRISM modeling language.
%	Similar in spirit to traditional compiler optimizations that try to summarize instruction sequences into shorter ones, our approach aims at computing the summary behavior of adjacent locations in the program's control-flow graph, thereby obtaining a program with fewer ``control states''.
%	This reduction is immediately reflected in the program's operational MDP semantics, allowing for more efficient model checking.
%	A key insight is that in principle, each (combination of) program variable(s) with finite domain can play the role of the program counter that defines the flow structure.
%	Therefore we can associate many flow-graphs to a program which offers great flexibility.
%	Unlike most other reduction techniques, our approach is property-directed. In many cases, the reduced program can be computed only once and then analyzed for many different parameter configurations -- a very common workflow in PMC. We demonstrate that practically relevant benchmarks can be significantly compressed with our method.

State-of-the-art probabilistic model checkers perform verification on explicit-state Markov models defined in a high-level programming formalism like the \prism\ modeling language.
Typically, the low-level models resulting from such program-like specifications exhibit lots of structure such as repeating subpatterns.
Established techniques like probabilistic bisimulation minimization are able to exploit these structures; however, they operate directly on the explicit-state model.
On the other hand, methods for reducing structured state spaces by reasoning about the high-level program have not been investigated that much.
In this paper, we present a new, simple, and fully automatic program-level technique to reduce the underlying Markov model.
Our approach aims at computing the summary behavior of adjacent locations in the program's control-flow graph, thereby obtaining a program with fewer ``control states''.
%This is inspired by traditional compiler optimizations that try to summarize instruction sequences into shorter ones.
This reduction is immediately reflected in the program's operational semantics, enabling more efficient model checking.
A key insight is that in principle, each (combination of) program variable(s) with finite domain can play the role of the program counter that defines the flow structure.
%% Therefore we can associate many flow-graphs to a program which offers great flexibility.
Unlike most other reduction techniques, our approach is property-directed and naturally supports unspecified model parameters.
%In many cases, it suffices to compute the reduced program only once to analyze it for many different parameter configurations---a rather common workflow in probabilistic model checking. 
Experiments demonstrate that our simple method yields state-space reductions of up to 80\% on practically relevant benchmarks.

%\keywords{Probabilistic Model Checking  \and State Explosion Problem \and Model Reduction.}
\end{abstract}

\section{Introduction}

\paragraph*{Modelling Markov models.}
Probabilistic model checking is a fully automated technique to rigorously prove correctness of a system model with randomness against a formal specification.
Its key algorithmic component is computing reachability probabilities on stochastic processes such as (discrete- or continuous-time) Markov chains and Markov Decision Processes.
These stochastic processes are typically described in some high-level modelling language.
State-of-the-art tools like \prism~\cite{prism}, \storm~\cite{storm} and \mcsta~\cite{mcsta} support input models specified in e.g., the \prism\ modeling language\footnote{\url{https://www.prismmodelchecker.org/manual/ThePRISMLanguage}}, PPDDL~\cite{ppddl}, a probabilistic extension of the planning domain definition language~\cite{DBLP:journals/jair/FoxL03}, the process algebraic language MoDeST~\cite{modest}, the \jani\ model exchange format~\cite{jani}, or the probabilistic guarded command language pGCL~\cite{DBLP:series/mcs/McIverM05}.
The recent tool from \cite{DBLP:conf/spin/FatmiCDWTB21} even supports verification of probabilistic models written in Java.

\paragraph{Model construction.}
Prior to computing reachability probabilities, existing model checkers explore all the program's reachable variable valuations and encode them into the state space of the operational Markov model.
Termination is guaranteed as variables are restricted to finite domains.
This paper proposes a simple reduction technique for this model construction phase that avoids unfolding the full model \emph{prior to} the actual analysis, thereby mitigating the state explosion problem. 
The basic idea is to unfold variables one-by-one---rather than all at once as in the standard pipeline---and apply analysis steps after each unfolding.
We detail this \emph{control-state reduction} technique for {probabilistic control-flow graphs} and illustrate its application to the \prism\ modelling language.
Its principle is however quite generic and is applicable to the aforementioned modelling formalisms.
Our technique is thus to be seen as a model simplification front-end for general purpose probabilistic model checkers.

\paragraph*{Approach.}
Technically our approach works as follows.
The principle is to unfold a (set of) variable(s) into the control state space, a technique inspired by static program analyses such as abstract interpretation~\cite{jeannet2003dynamic}.
The selection of which variables to unfold is property-driven, i.e., depending on the reachability or reward property to be checked.
We define the unfolding on probabilistic control-flow programs~\cite{dubslaff_pcfp} (PCFPs, for short) and simplify them using a technique that generalizes \emph{state elimination} in (parametric) Markov chains~\cite{daws}.
Our elimination technique heavily relies on classical \emph{weakest precondition reasoning}~\cite{DBLP:books/ph/Dijkstra76}.
%The latter is in fact an automata-theoretic analogue of Gaussian elimination.
This enables the elimination of several states at once from the underlying ``low-level'' Markov model while preserving \emph{exact} reachability probabilities or expected rewards.
\Cref{fig:nandplots} provides a visual intuition on the resulting model compression.

The choice of the variables and locations for unfolding and elimination, resp., is driven by heuristics.
In a nutshell, our unfolding heuristics prefers the variables that lead to a high number of control-flow locations without self-loops.
These loop-free locations are then removed by the elimination heuristics which gives preference to locations whose removal does not blow up the transition matrix of the underlying model. 
Unfolding and elimination steps are performed in an alternating fashion, but only until the PCFP size reaches a certain threshold.
After this, the reduction phase is complete and the transformed PCFP can be fed into a standard probabilistic model checker.
%With this simple heuristics, time and memory savings of about 50\% can be observed on a wide range of benchmarks taken from the literature.

\begin{figure}[t]
    \centering
    \includegraphics[scale=0.3]{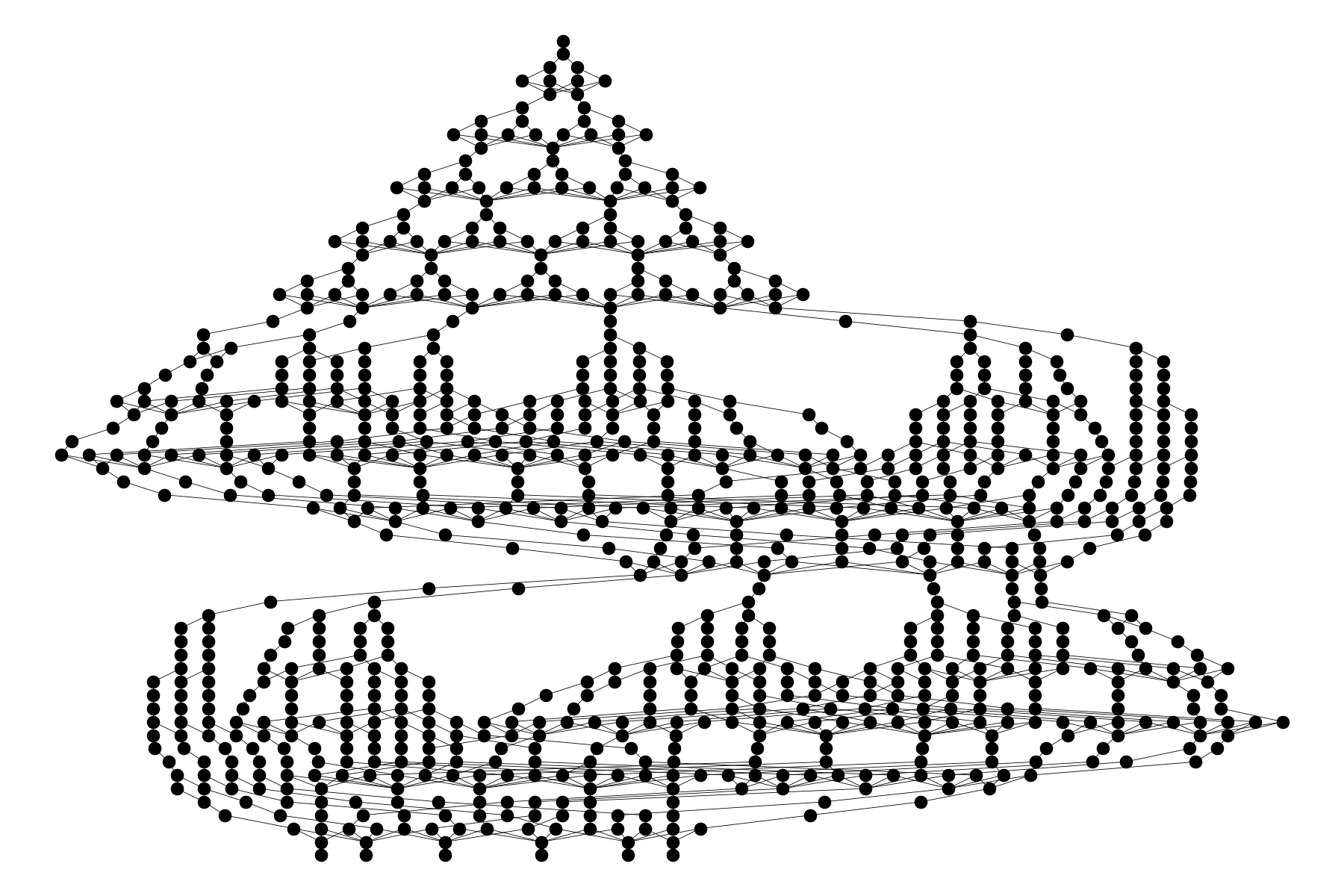}
    \hspace{5mm}
    %    \scalebox{-1}[1]{ \includegraphics[scale=0.33]{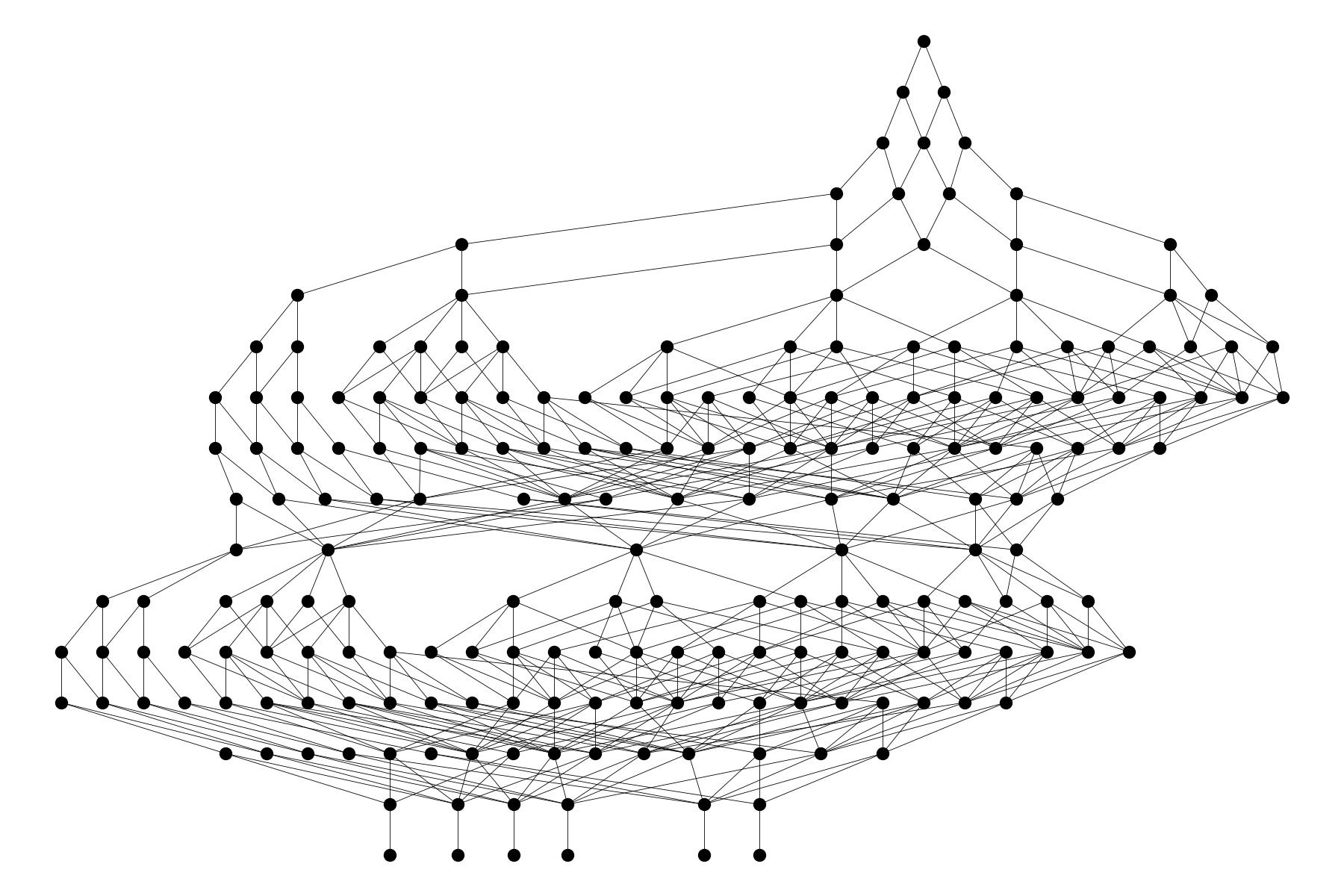}}
    \includegraphics[scale=0.3]{nand_simpl.png}
    \caption{
        Left: Visualization of the original \benchmark{nand} model from~\cite{nand} (930 states, parameters 5/1).
        Transitions go from top to bottom.
        Right: The same model after our reduction ($207$ states).
        A single ``program counter variable'' taking at most 5 different values was unfolded and a total of three locations were eliminated thereafter.
        Note that the overall structure is preserved but several \emph{local} substructures such as the pyramidal shape at the top are compressed significantly.
        This behavior is typical for our approach.
    }
    \label{fig:nandplots}
\end{figure}

\paragraph*{Contributions.}
In summary, the main contributions of this paper are:
\begin{itemize}
\item A simple, widely applicable reduction technique that considers each program variable with finite domain as a ``program counter'' and selects suitable variables for unfolding into the control state space one-by-one.
\item A sound rule to eliminate control-flow locations in PCFPs in order to shrink the state space of the underlying Markov model while preserving \emph{exact} reachability probabilities or expected rewards.
\item Elimination in PCFPs---in contrast to Markov chains---is shown to have an exponential worst-case complexity.
\item An implementation in the probabilistic model checker \storm\ demonstrating the potential to significantly compress practically relevant benchmarks.
\end{itemize}

\paragraph*{Related work.}
The state explosion problem has been given top priority in both classical and probabilistic model checking.
Techniques similar to ours have been known for quite some time in the non-probabilistic setting~\cite{DBLP:conf/forte/DongR99,DBLP:conf/cav/KurshanLY02}.
Regarding probabilistic model checking, reduction methods on the state-space level include symbolic model checking using MTBDDs~\cite{symbolicPMC}, SMT/SAT techniques~\cite{DBLP:conf/vmcai/WimmerBB09,DBLP:conf/cav/BatzJKKMS20}, bisimulation minimization~\cite{DBLP:conf/tacas/KatoenKZJ07,DBLP:conf/tacas/ValmariF10,DBLP:conf/concur/JansenGTY20}, Kronecker representations~\cite{DBLP:journals/jlp/BuchholzKKT03,symbolicPMC} and 
partial order reduction~\cite{por_baier,por_dargenio}.
Language-based reductions include symmetry reduction~\cite{symmred}, bisimulation reduction using SMT on \prism\ modules~\cite{DBLP:conf/vmcai/DehnertKP13}, as well as abstraction-refinement techniques~\cite{DBLP:journals/fmsd/KattenbeltKNP10,DBLP:conf/tacas/HahnHWZ10,DBLP:conf/vmcai/WachterZ10}.
Our reductions on PCFPs are inspired by state elimination~\cite{daws}.
Similar kinds of reductions on probabilistic workflow nets have been considered in~\cite{DBLP:journals/pe/EsparzaHS17}.
Despite all these efforts, it is somewhat surprising that simple probabilistic control-flow reductions as proposed in this paper have not been investigated that much.
A notable exception is the recent work by Dubslaff \emph{et al.} that applies existing static analyses to control-flow-rich PCFPs~\cite{dubslaff_pcfp}. 
In contrast to our method, their technique yields bisimilar models and exploits a different kind of structure.

\paragraph{Organization of the paper.} 
\Cref{sec:example} starts off by illustrating the central aspects of our approach by example. 
\Cref{sec:prelims} defines PCFPs and their semantics in terms of MDPs. 
\Cref{sec:main} formalizes the reductions, proves their correctness and analyzes the complexity.
%\Cref{sec:extensions} summarizes possible extensions of our basic reduction. 
Our implementation in \storm\ is discussed in \Cref{sec:implementation}.
We present our experimental evaluation in \Cref{sec:experiments} and conclude in \Cref{sec:conclusion}.
\iftoggle{arxiv}{}{
A \emph{full version} of this paper including detailed proofs is available online~\cite{arxiv}.
}
\section{A Bird's Eye View}
\label{sec:example}
%In this section, we 
This section introduces a running example to illustrate our approach.
Consider a game of chance where a gambler starts with an initial budget of $\mathtt{x = N/2}$ tokens.
The game is played in rounds, each of which either increases or decreases the budget.
The game is lost once the budget has dropped to zero and won once it exceeds $\mathtt{N}$ tokens.
In each round, a fair coin is tossed:
If the outcome is tails, then the gambler loses one token and proceeds to the next round; on the other hand, if heads occurs, then the coin is flipped again.
If tails is observed in the second coin flip, then the gambler also loses one token; however, if the outcome is again heads then the gambler receives \emph{two} tokens.

In order to answer questions such as \enquote{Is this game fair?} (for a fixed $\mathtt{N}$), probabilistic model checking can be applied.
To this end, we model the game as the \prism\ program
in \Cref{fig:coingame_prism}.
We briefly explain its central components:
The first two lines of the \texttt{module} block are variable declarations.
Variable $\mathtt{x}$ is an integer with bounded domain and $\mathtt{f}$ is a Boolean.
The idea of $\mathtt{x}$ and $\mathtt{f}$ is to represent the current budget and whether the coin has to be flipped a second time, respectively.
The next three lines that each begin with $\texttt{[]}$ define \emph{commands} which are interpreted as follows:
If the \emph{guard} on the left-hand side of the arrow \texttt{->} is satisfied, then one of the updates on the right side is executed with its corresponding probability.
For instance, in the first command, $\mathtt{x}$ is decremented by one (and $\mathtt{f}$ is left unchanged) with probability $1/2$.
Otherwise $\mathtt{f}$ is set to true.
The order in which the commands occur in the program text is irrelevant.
If there is more than one command enabled for a specific valuation of the variables, then one of them is chosen non-deterministically.
Our example is, however, \emph{deterministic} in this regard since the three guards are mutually exclusive.

\begin{figure}[t]
	\centering
	\begin{verbbox}
dtmc
const int N;
module coingame
     x : [0..N+1] init N/2; 
     f : bool init false;
     [] 0<x & x<N & !f  ->  1/2: (x'=x-1)              + 1/2: (f'=true);
     [] 0<x & x<N & f   ->  1/2: (x'=x-1) & (f'=false) + 1/2: (x'=x+2) & (f'=false);
     [] x=0 | x>=N      ->  1:   (f'=false);
endmodule
	\end{verbbox}
    \begin{adjustbox}{max width=\textwidth-10pt - 1.33pt ,frame=0.66pt 5pt 0pt}
        \theverbbox % paste the content of the verbbox environment
    \end{adjustbox}
	\caption{The coin game as a \prism\ program. Variable $\mathtt{x}$ stands for the current budget.}
	\label{fig:coingame_prism}
\end{figure}

Probabilistic model checkers like \prism\ and \storm\ expand the above program as a Markov chain with approximately $2\mathtt{N}$ states.
This is depicted for $\mathtt{N} = 6$ at the top of \Cref{fig:example_unfolded}.
Given that we are only interested in the winning probability (i.e., to reach one of the two rightmost states), this Markov chain is equivalent to the smaller one on the bottom of \Cref{fig:example_unfolded}.
Indeed, \emph{eliminating} each dashed state in the lower row individually yields that the overall probability per round to go one step to the left is $3/4$ and $1/4$ to go two steps to the right.
On the program level, this simplification could have been achieved by summarizing the first two commands to
\begin{verbbox}
	[] 0<x & x<N  ->  3/4: (x'=x-1) + 1/4: (x'=x+2);
\end{verbbox}
\begin{center}
	\theverbbox
\end{center}
so that variable $\mathtt{f}$ is effectively removed from the program.

\begin{figure}[t]
	\centering 
	\begin{tikzpicture}[myArrowStyle, every state/.style={scale=0.5}, node distance= 6mm and 10mm]
	\node[state] (0!f) {};
	\node[state, right=of 0!f] (1!f) {};
	\node[state, right=of 1!f] (2!f) {};
	\node[state, right=of 2!f] (3!f) {};
	\node[state, right=of 3!f] (4!f) {};
	\node[state, right=of 4!f] (5!f) {};
	\node[state, accepting, right=of 5!f] (6!f) {};
	\node[state, accepting, right=of 6!f] (7!f) {};
	
	\node[state,white,below=of 0!f] (0f) {}; % phantom state
	\node[state, below=of 1!f,densely dashed] (1f) {};
	\node[state, right=of 1f,densely dashed] (2f) {};
	\node[state, right=of 2f,densely dashed] (3f) {};
	\node[state, right=of 3f,densely dashed] (4f) {};
	\node[state, right=of 4f,densely dashed] (5f) {};
	
	\node[above = 1mm of 0!f] {0};
	\node[above = 1mm of 1!f] {1};
	\node[above = 1mm of 2!f] {2};
	\node[above = 1mm of 3!f] {3};
	\node[above = 1mm of 4!f] {4};
	\node[above = 1mm of 5!f] {5};
	\node[above = 1mm of 6!f] {6};
	\node[above = 1mm of 7!f] {7};
	
	\node[left = 2mm of 0!f]  {\texttt{!f}};
	\node[left = 2mm of 0f]  {\texttt{f}};

	\draw[->] (1!f) -- (0!f);
	\draw[->] (2!f) -- (1!f);
	\draw[->] (3!f) -- (2!f);
	\draw[->] (4!f) -- (3!f);
	\draw[->] (5!f) -- (4!f);
%	\draw[->] (6!f) -- (5!f);
	
	\draw[->] (1!f) -- (1f);
	\draw[->] (2!f) -- (2f);
	\draw[->] (3!f) -- (3f);
	\draw[->] (4!f) -- (4f);
	\draw[->] (5!f) -- (5f);
	
	\draw[->] (1f) -- (3!f);
	\draw[->] (2f) -- (4!f);
	\draw[->] (3f) -- (5!f);
	\draw[->] (4f) -- (6!f);
	\draw[->] (5f) -- (7!f);
	
	\draw[->] (1f) -- (0!f);
	\draw[->] (2f) -- (1!f);
	\draw[->] (3f) -- (2!f);
	\draw[->] (4f) -- (3!f);
	\draw[->] (5f) -- (4!f);
	
	\draw[->] (0!f) edge[loop below] (0!f);
	\draw[->] (6!f) edge[loop right] (6!f);
	\draw[->] (7!f) edge[loop below] (7!f);
	\end{tikzpicture}
	\vspace{4mm}
	
	\begin{tikzpicture}[myArrowStyle, every state/.style={scale=0.5}, node distance = 6mm and 10mm]
	\node[state] (0!f) {};
	\node[state, right=of 0!f] (1!f) {};
	\node[state, right=of 1!f] (2!f) {};
	\node[state, right=of 2!f] (3!f) {};
	\node[state, right=of 3!f] (4!f) {};
	\node[state, right=of 4!f] (5!f) {};
	\node[state, accepting, right=of 5!f] (6!f) {};
	\node[state, accepting,right=of 6!f] (7!f) {};

	\node[left = 2mm of 0!f]  {\phantom{!}\texttt{f}};

	\draw[->] (1!f) -- node[above] {\scriptsize $3/4$} (0!f);
	\draw[->] (2!f) -- node[above] {\scriptsize $3/4$}(1!f);
	\draw[->] (3!f) -- node[above] {\scriptsize $3/4$}(2!f);
	\draw[->] (4!f) -- node[above] {\scriptsize $3/4$}(3!f);
	\draw[->] (5!f) -- node[above] {\scriptsize $3/4$}(4!f);
%	\draw[->] (6!f) -- node[above] {$3/4$} (5!f);

	\draw[->] (1!f) edge[bend right] node[below] {\scriptsize $1/4$} (3!f);
	\draw[->] (2!f) edge[bend right] node[below] {\scriptsize $1/4$}(4!f);
	\draw[->] (3!f) edge[bend right] node[below] {\scriptsize $1/4$}(5!f);
	\draw[->] (4!f) edge[bend right] node[below] {\scriptsize $1/4$}(6!f);
	\draw[->] (5!f) edge[bend right] node[below] {\scriptsize $1/4$} (7!f);

	\draw[->] (0!f) edge[loop below] (0!f);
	\draw[->] (6!f) edge[loop right] (6!f);
	\draw[->] (7!f) edge[loop below] (7!f);
	\end{tikzpicture}
	\caption{Top: The Markov chain of the original coin game for $\mathtt{N}=6$. All transition probabilities (except on 	the self-loops) are $1/2$. Bottom: The Markov chain of the simplified model.}
	\label{fig:example_unfolded}
\end{figure}
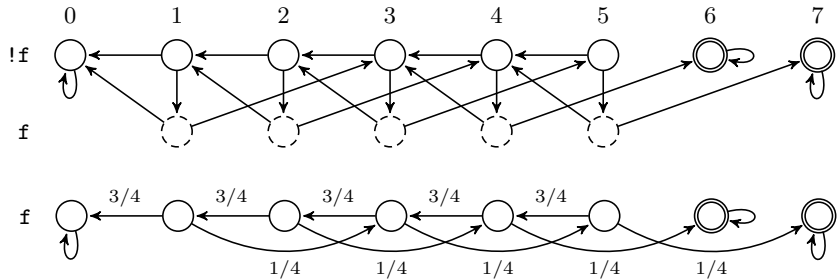

Obtaining such simplifications in an \emph{automated} manner is the main purpose of this paper. 
In summary, our proposed solution works as follows:
\begin{enumerate}
	\item
    First, we view the input program as a probabilistic control flow program (PCFP), which can be seen as a generalization of \prism\ programs from a single to multiple control-flow locations (\Cref{fig:coingame}, left).
    A \prism\ program (with a single module) is a PCFP with a unique control location.
    Imperative programs such as \emph{pGCL} programs~\cite{DBLP:series/mcs/McIverM05} can be regarded as PCFPs with roughly one location per line of code.
    %The \jani\ model exchange format~\cite{jani} even makes explicit use of locations.
    %
	\item
    We then \emph{unfold} one or several variables into the location space, thereby interpreting them as \enquote{program counters}.
    We will discuss in \Cref{sec:unfolding} that---in principle---every variable can be unfolded in this way.
    The distinction between program counters and \enquote{data variables} is thus an informal one.
    This insight renders the approach quite flexible.
    In the example, we unfold $\mathtt{f}$ (\Cref{fig:coingame}, middle), but we stress that it is also possible to unfold $\mathtt{x}$ instead (for any fixed $\mathtt{N}$), even though this is not as useful in this case.
	\item
    The last and most important step is \emph{elimination}.
    Once sufficiently unfolded, we identify locations in the PCFP that can be eliminated.
    Our elimination rules are inspired by state elimination in Markov chains~\cite{daws}.
    In the example, we eliminate the location labeled $\mathtt{f}$.
    To this end, we try to eliminate all ingoing transitions of location $\mathtt{f}$.
    Applying the rules described in detail in \Cref{sec:main}, we obtain the PCFP shown in \Cref{fig:coingame} (right).
    This PCFP generates the reduced Markov chain in \Cref{fig:example_unfolded} (bottom).
	Here, location elimination has also reduced the size of the PCFP, but this is not always the case.
    In general, elimination adds more commands to the program while reducing the size of the generated Markov chain or MDP (cf.\ \Cref{sec:experiments}).
%	In practice, such unsatisfiable guards frequently arise in our construction and may be identified with an SMT-solver.
\end{enumerate}
These unfolding and elimination steps may be performed in an alternating fashion following the principle \emph{\enquote{unfold a bit, eliminate reasonably}}. Here, \enquote{reasonably} means that in particular, we must be careful to not blow up the underlying transition matrix (cf.\ \Cref{sec:implementation}).

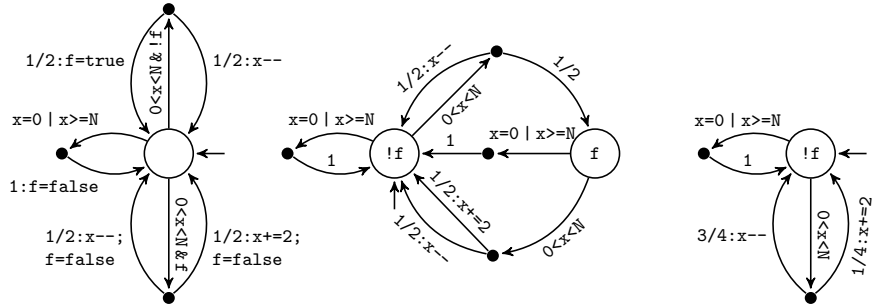
\begin{figure}[t]
	\centering
	\begin{tikzpicture}[myArrowStyle, node distance = 15mm, every node/.style={scale=0.8, font=\ttfamily},initial text=]
	\node[state,initial, initial where=right] (l) {};
	\node[circle,fill=black,inner sep=2pt, left = 10mm of l] (xeq0) {};
	%	\node[circle,fill=black,inner sep=2pt, right = 7mm of l] (xgeqN) {};
	\node[circle,fill=black,inner sep=2pt, above = of l] (nflip) {};
	\node[circle,fill=black,inner sep=2pt, below = of l] (flip) {};
	
	\draw[->] (l) edge[bend right] node[above left] {x=0\,|\,x>=N} (xeq0);
	\draw[->] (xeq0) edge[bend right] node[below left] {1:f=false} (l);
	%	\draw[->] (l) edge[bend left] node[above] {x>=N} (xgeqN);
	%	\draw[->] (xgeqN) edge[bend left] node[above] {1} (l);
	
	\draw[->] (l) edge node[sloped, above] {0<x<N\,\&\,!f} (nflip);
	\draw[->] (nflip) edge[bend left=45] node[right,yshift=2mm] {1/2:x--} (l);
	\draw[->] (nflip) edge[bend right=45] node[left,yshift=2mm] {1/2:f=true} (l);
	
	\draw[->] (l) edge node[sloped, above] {0<x<N\,\&\,f} (flip);
	\draw[->] (flip) edge[bend left=45] node[left, align=left,yshift=-2mm] {1/2:x--;\\f=false} (l);
	\draw[->] (flip) edge[bend right=45] node[right, align=left,yshift=-2mm] {1/2:x+=2;\\f=false} (l);
	
	%%%%%%%%%%%%%%%%%%%%%	
	
	\node[state, right = 50mm of l] (f) {f};
	\node[state, left = 20mm of f,initial, initial where=below] (nf) {!f};
	
	\node[circle,fill=black,inner sep=2pt, left = 10mm of f] (xeq0f) {};
	\node[circle,fill=black,inner sep=2pt, left = 10mm of nf] (xeq0nf) {};
	%	\node[circle,fill=black,inner sep=2pt, right = 7mm of f] (xgeqNf) {};
	%	\node[circle,fill=black,inner sep=2pt, right = 7mm of nf] (xgeqNnf) {};
	
	\node[circle,fill=black,inner sep=2pt, above right = 15mm of nf] (nflip) {};
	\node[circle,fill=black,inner sep=2pt, below left = 15mm of f] (flip) {};
	
	\draw[->] (nf) edge[bend right] node[above] {x=0\,|\,x>=N} (xeq0nf);
	\draw[->] (xeq0nf) edge[bend right] node[above] {1} (nf);
	%	\draw[->] (nf) edge[bend left] node[above] {x>=N} (xgeqNnf);
	%	\draw[->] (xgeqNnf) edge[bend left] node[above] {1} (nf);
	
	\draw[->] (f) edge node[above,yshift=1mm] {x=0\,|\,x>=N} (xeq0f);
	\draw[->] (xeq0f) edge node[above] {1} (nf);
	%	\draw[->] (f) edge[bend left] node[above] {x>=N} (xgeqNf);
	%	\draw[->] (xgeqNf) edge[bend left] node[above] {1} (f);
	
	\draw[->] (nf) edge node[sloped, below] {0<x<N } (nflip);
	\draw[->] (nflip) edge[bend right] node[sloped, above] {1/2:x--} (nf);
	\draw[->] (nflip) edge[bend left] node[sloped, above ]{1/2} (f);
	
	\draw[->] (f) edge[bend left] node[sloped, below] {0<x<N } (flip);
	\draw[->] (flip) edge[bend left ] node[sloped, below] {1/2:x--} (nf);
	\draw[->] (flip) edge[bend left=0] node[sloped, above] {1/2:x+=2} (nf);
	
	%%%%%%%%%%%%%%%%%%%%
	
	\node[state, right= 22mm of f,initial, initial where=right] (nf) {!f};
%	\node[state, right= 15mm of nf] (f) {f};
	
	\node[circle,fill=black,inner sep=2pt, left =10mm  of nf] (xeq0) {};
	%	\node[circle,fill=black,inner sep=2pt, left =10mm  of f] (xeq0r) {};
	%	\node[circle,fill=black,inner sep=2pt, right = 7mm of nf] (xgeqN) {};
%	\node[circle,fill=black,inner sep=2pt, above= of nf] (nflipl) {};
	%	\node[circle,fill=black,inner sep=2pt, above right = of nf] (nflipr) {};
	\node[circle,fill=black,inner sep=2pt, below = of nf] (flip) {};
	
	\draw[->] (nf) edge[bend right] node[above,xshift=-5mm,near start] {x=0\,|\,x>=N} (xeq0);
	\draw[->] (xeq0) edge[bend right] node[above] {1} (nf);
	%	\draw[->] (f) edge[bend right] node[above] {x=0 | x>=N} (xeq0r);
	%	\draw[->] (xeq0r) edge[bend right] node[above] {1} (f);
	%	\draw[->] (nf) edge[bend left] node[below] {x>=N} (xgeqN);
	%	\draw[->] (xgeqN) edge[bend left] node[below] {1} (nf);
	
%	\draw[->] (nf) edge node[right,align=left,yshift=-3mm] {0<x<N \& \\ (x=0 | x>=N) \\ ($\equiv false$)} (nflipl);
%	\draw[->] (nflipl) edge[bend left] node[above,sloped] {1/2} (f);
%	\draw[->] (nflipl) edge[bend right=55] node[left] {1/2:x--} (nf);
	
	%	\draw[->] (nf) edge node[sloped, below] {0<x<N \& x>=N} (nflipr);
	%	\draw[->] (nflipr) edge[bend left=0] node[left] {1/2} (f);
	%	\draw[->] (nflipr) edge[bend right] node[sloped, above] {1/2:x--} (nf);
	
	\draw[->] (nf) edge node[sloped, above] {0<x<N} (flip);
	\draw[->] (flip) edge[bend left=45] node[left, align=left] {3/4:x--} (nf);
	\draw[->] (flip) edge[bend right=45] node[below,sloped] {1/4:x+=2} (nf);
	
	\end{tikzpicture}
	
	%%%%%%%%%%%%%%%%%%%%%
	
	\caption{Left: The coin game as a single-location PCFP $\pcfp_{\mathit{game}}$. Middle: The PCFP after unfolding variable $\mathtt{f}$. Right: The PCFP after eliminating the location labeled $\mathtt{f}$.}
	\label{fig:coingame}
\end{figure}

Despite its simplicity, we are not aware of any other automatic technique that achieves the same or similar reductions on the coin game model.
In particular, bisimulation minimization is not applicable:
The bisimulation quotient of the Markov chain in \Cref{fig:example_unfolded} (top) is already obtained by merging just the two rightmost goal states.

Arguably, the program transformations in the above example could have been done by hand.
However, automation is crucial for our technique because the transformation makes the program harder to understand and obfuscates the original model's mechanics due to the removed intermediate control states.
Indeed, \emph{simplification} only takes place from the model checker's perspective but not from the programmer's.
Moreover, our transformations are rather tedious and error-prone, and may not always be that obvious for more complicated programs.
To illustrate this, we mention the work \cite{nand} where a \prism\ model of the von Neumann NAND multiplexing system was presented.
Optimizations with regard to the resulting state space were applied manually already at modeling time
\footnote{
    See paragraph 7 in \cite[Sec.~III A.]{nand}.
    %``However, [the naive approach] leads to the well known state space explosion problem, where the size of the probabilistic model constructed grows to an unfeasible level. [...] An important observation, which allowed us to overcome this problem, was that the actual value of each input and output is not important [...]''
}.
Despite these (successful) manual efforts, our fully automatic technique can further shrink the state-space of the same model by $\approx 80\%$ (cf.\ \Cref{sec:experiments}).

\section{Technical Background on PCFPs}
\label{sec:prelims}

In this section, we review the necessary definitions of Markov Decision Processes (MDPs), Probabilistic Control Flow Programs (PCFPs), and reachability properties.
The set of probability distributions on a finite set $S$ is denoted $\Dist(S) = \Set{p \colon S \to [0,1]}{\sum_{s \in S}p(s) =1}$.
The set of (total) functions $A \to B$ is denoted $B^A$.

\subsubsection{Basic Markov Models}
An \emph{MDP} is tuple $\mdp = \mdpinit$ where $S$ is a finite set of states, $\sinit \in S$ is an initial state, $\Act$ is a finite set of action labels and $P \colon S \times \Act \topartial \Dist(S)$ is a (partial) probabilistic transition function.
We say that action $a \in \Act$ is \emph{available} at state $s \in S$ if $P(s,a)$ is defined.
%We require that there is at least one available action per state.
We use the notation $\mdptrans{s}{a}{p}{s'}$ to indicate that $P(s,a)(s') = p$.
In the following, we write $P(s,a,s')$ rather than $P(s,a)(s')$.

A \emph{Markov chain} is an  MDP with exactly one available action at every state.
We omit action labels when considering Markov chains, i.e., the transition function of a Markov chain has type $P \colon S \to \Dist(S)$.
Given a Markov chain $\mdp$ together with a goal set $G \subseteq S$, we define the set of paths reaching $G$ as $\mathsf{Paths}(G) = \Set{s_0\ldots s_n \in S^n}{n \geq 0, s_0 = \sinit, s_n \in G, \forall i<n \colon s_i \notin G}$.
The \emph{reachability probability} of $G$ is
$
	\prob_{\mdp}(\reach G) = \sum_{\pi \in \mathsf{Paths}(G)} \prod_{i=0}^{l(\pi) - 1}P(\pi_i, \pi_{i+1})
$
where $l(\pi)$ denotes the length of a path $\pi$ and $\pi_i$ is the $i$-th state along $\pi$.
$\prob(\reach G)$ is always a well-defined probability (see e.g.~\cite[Ch.\ 10]{bk08} for more details).

A (memoryless deterministic) \emph{scheduler} of an MDP is a mapping $\sigma \in \Act^S$ with the restriction that action $\sigma(s)$ is available at $s$.
Each  scheduler $\sigma$ induces a Markov chain $\mdp^{\sigma}$ by retaining only the action $\sigma(s)$ at every $s \in S$.
Scheduler $\sigma$ is called \emph{optimal} if $\sigma = \operatorname{argmax}_{\sigma'}\prob_{{\mdp}^{\sigma'}}(\reach G)$ (or $\operatorname{argmin}$, depending on the context). In finite MDPs as considered here, there always exists an optimal memoryless and deterministic scheduler, even if the above $\operatorname{argmax}$ is taken over more general schedulers that may additionally use memory and/or randomization~\cite{puterman}.

\subsubsection{PCFP Syntax and Semantics}

We first define (guarded) commands. 
Let $\Var = \{x_1,\ldots,x_n\}$ be a set of integer-valued variables.
An \emph{update} is a set of assignments
\[
	\update \quad=\quad \set{ x_1' ~=~ f_1(x_1,\ldots,x_n),\quad \ldots\,, \quad x_n' ~=~  f_n(x_1,\ldots,x_n) }
\]
that are executed \emph{simultaneously}.
We assume that the expressions $f_i$ always yield integers.
An update $\update$ transforms a \emph{variable valuation} $\val \in \Z^\Var$ into a valuation $\val' = u(\val)$.
For technical reasons, we also allow \emph{chaining} of updates, that is, if $\update_1$ and $\update_2$ are updates, then $\update_1\seq\update_2$ is the update that corresponds to executing the updates in sequence: first $\update_1$ and then $\update_2$.
A \emph{command} is an expression
\[
	\guard \quad\rightarrow\quad p_1 \colon \update_1 ~+~ \ldots ~+~ p_k \colon \update_k ~,
\]
where $\guard$ is a \emph{guard}, i.e., a Boolean expression over program variables, $u_i$ are updates, and $p_i$ are non-negative real numbers such that $\sum_{i=1}^k p_i = 1$, i.e., they describe a probability distribution over the updates.
We further define \emph{location-guided} commands which additionally depend on \emph{control-flow locations} $l$ and $l_1,\ldots,l_k$:
\[
\guard,\, l \quad\rightarrow\quad p_1 \colon \update_1 \colon l_1 ~+~ \ldots ~+~ p_k \colon \update_k \colon l_k~.
\]
The intuitive meaning of a location-guided command is as follows:
It is enabled if the system is at location $l$ \emph{and} the current variable valuation satisfies $\guard$.
Based on the probabilities $p_1,\ldots,p_k$, the system then randomly executes one of the updates $\update_i$ and transitions to the next location $l_i$.
We use the notation $\pcfptrans{l}{\guard}{p_i}{\update_i}{l_i}$ to refer to such a possible \emph{transition} between locations.
We call location-guided commands simply \emph{commands} in the rest of the paper. 

\emph{Probabilistic Control Flow Programs} (PCFPs) combine several commands into a probabilistic program and constitute the formal basis of our approach:

\begin{definition}[PCFP]
	A PCFP is a tuple $\pcfp = \pcfpinit$ where $\Loc$ is a non-empty set of (control-flow) \emph{locations},
    $\Var$ is a set of integer-valued variables,
    $\dom \in \mathcal{P}(\mathbb{Z})^\Var$ is a \emph{domain} for each variable,
    $\Cmd$ is a set of \emph{commands} as defined above,
    and $\iota = (l_\iota, \val_\iota)$ is the initial location/valuation pair.
\end{definition}
This definition and our notation for commands are similar to \cite{dubslaff_pcfp}.
We also allow Boolean variables as \emph{syntactic sugar} by identifying $\mathtt{false} \equiv 0$ and $\mathtt{true} \equiv 1$.
We generally assume that $\Loc$ and all variable domains are \emph{finite} sets.
For a variable valuation $\val \in \Z^\Var$, we write $\val \in \dom$ if $\val(x) \in \dom(x)$ for all $x \in \Var$.
In some occasions, we consider only \emph{partial} valuations $\val \in \Z^{\Var'}$, where $\Var' \subsetneq \Var$.
We use the notations $\guard[\val]$ and $\update[\val]$ to indicate that all variables occurring in the guard $\guard$ (the update $\update$, respectively) are replaced according to the given (partial) valuation $\val$.
For updates, we also remove assignments whose left-hand side variables become a constant.
Recall that the notation $\update(\val)$ has a different meaning; it denotes the result of executing the update $\update$ on valuation $\val$.

The straightforward operational semantics of a PCFP is defined in terms of a Markov Decision Process (MDP).

\begin{definition}[MDP Semantics]
\label{def:mdpsemantics}
	For a PCFP $\pcfp = \pcfpinit$, we define the \emph{semantic MDP} $\mdp_{\pcfp} = \mdpinit$ as follows:
	\[
		S ~=~ \Loc \,\times\, \{\val \in \dom\} \,\cup\, \{\bot \},
        \qquad
        \Act ~=~ \Set{a_\gamma}{\gamma \in \Cmd},
        \qquad
        \iota ~=~ \langle l_\iota, \val_\iota \rangle
	\]
	and the probabilistic transition relation $P$ is defined according to the rules
	\[
		\frac{l_1 \xrightarrow{\guard \rightarrow p : \update} l_2 \,\wedge\, \val \models \guard \,\wedge\, \update(\val) \in \dom}{\langle l_1, \val \rangle \xrightarrow{a_{\gamma}, p} \langle l_2, \update(\val) \rangle},
		\hspace{5mm}
		\frac{l_1 \xrightarrow{\guard \rightarrow p : \update} l_2 \,\wedge\, \val \models \guard \,\wedge\, \update(\val) \notin \dom}{\langle l_1, \val \rangle \xrightarrow{a_{\gamma}, p} \bot}
	\]
	where $a_{\gamma} \in \Act$ is an action label that uniquely identifies the command $\gamma$ containing  transition $l_1 \xrightarrow{\guard \rightarrow p : \update} l_2$.
\end{definition}
An element $\langle l, \val \rangle \in  \Loc \times \{\val \in \dom\}$ is called a \emph{configuration}.
A PCFP is \emph{deterministic} if the MDP $\mdp_{\pcfp}$ is a Markov chain.
Moreover, we say that a PCFP is \emph{well-formed} if the out-of-bounds state $\bot$ is not reachable from the initial state and if there is at least one action available at each state of $\mdp_{\pcfp}$.
From now on, we assume that PCFPs are always well-formed.

\begin{example}
	The semantic MDP---a Markov chain in this case---of the two PCFPs in \Cref{fig:coingame} (left and middle) is given in \Cref{fig:example_unfolded} (top), and the one of the PCFP in \Cref{fig:coingame} (right) is depicted in \Cref{fig:example_unfolded} (bottom).
    \qedExample
\end{example}

\subsubsection{Reachability in PCFPs}
It is natural to describe a set of good (or bad) PCFP configurations by means of a predicate $\goalpred$ over the program variables which defines a set of target states in the semantic MDP $\mdp_{\pcfp}$.
We slightly extend this to account for information available from previous unfolding steps.
To this end, we will sometimes consider a labeling function $L \colon \Loc \to \mathbb{Z}^{\Var'}$ that assigns to each location an additional variable valuation $\val'$ over $\Var'$, a set of variables \emph{disjoint} to the actual programs variables $\Var$.
The idea is that $\Var'$ contains the variables that have already been unfolded (see \Cref{sec:unfolding} below for the details).
A predicate $\goalpred$ over $\Var \uplus \Var'$ describes the following goal set in the MDP $\mdp_{\pcfp}$:
\[
    G_{\goalpred} \quad = \quad \{\, \langle l, \val\rangle \,\mid\, l \in \Loc,~ \val \in \dom,\, (\val, L(l)) \models \goalpred \,\}
\]
where $(\val, L(l))$ is the variable valuation over $\Var \uplus \Var'$ that results from combining $\val$ and $L(l)$.
%The states $\langle l, \val \rangle$ of the semantic MDP then simply bear the additional label $L(l)$, making it possible to define a goal set by means of a predicate $\goalpred$ over $\Var \cup \Var'$.
%Given a location $l$ of such a labeled PCFP, we then say that $l$ is a \emph{potential goal} if $\goalpred[L(l)]$ is satisfiable in the domain of $\Var$.

\begin{definition}[Potential Goal]
	Let $\pcfpinit$ be a PCFP labeled with valuations $L: \Loc \to \mathbb{\Z}^{\Var'}$ and let $\goalpred$ be a predicate over $\Var \uplus \Var'$. A location $l \in \Loc$ is called a \emph{potential goal} w.r.t.\ 
   $\goalpred$ if $\goalpred[L(l)]$ is satisfiable in $\dom$.
\end{definition}

\begin{example}
    \label{ex:potgoal}
	Consider the PCFP in \Cref{fig:coingame} (middle) with $\mathtt{N} =6$.
    Note that here, $\Var = \{\mathtt{x}\}$ and $\Var' = \{\mathtt{f}\}$.
    Let $\goalpred = (\texttt{x}\geq 6  \land \texttt{f}=\mathtt{false})$.
    Assume the labeling function $L(\mathtt{!f}) = \{f \mapsto \mathtt{false}\}$ and $L(\mathtt{f}) = \{f \mapsto \mathtt{true}\}$.
    Then the location labeled $\mathtt{!f}$ is a potential goal w.r.t.\ $\goalpred$ because $\goalpred[\mathtt{f} \mapsto \mathtt{false}] \equiv \mathtt{x\geq 6}$ is satisfiable.
    The other location $\mathtt{f}$ is no potential goal.
    \qedExample
\end{example}

In \Cref{sec:main} below, we introduce PCFP transformation rules that preserve reachability probabilities.
This is formally defined as follows:

\begin{definition}[Reachability Equivalence]
    Let $\pcfp_1$ and $\pcfp_2$ be PCFPs over the same set of variables $\Var$.
    For $i \in \{1,2\}$, let $L_i \colon \Loc_i \to \mathbb{Z}^{\Var'}$ be labeling functions on $\pcfp_i$.
    Further, let $\goalpred$ be a predicate over $\Var \uplus \Var'$.
    Then $\pcfp_1$ and $\pcfp_2$ are \emph{$\goalpred$-reachability equivalent} if
    \[
        \underset{\sigma}{\operatorname{opt}}~ \prob_{\mdp^{\sigma}_{\pcfp_1}}(\reach G_\goalpred)
        \quad=\quad
        \underset{\sigma}{\operatorname{opt}}~ \prob_{\mdp^{\sigma}_{\pcfp_2}}(\reach G_\goalpred)
    \]
    for both $\operatorname{opt} \in \{\min, \max\}$ and where $\sigma$ ranges of the class of memoryless deterministic schedulers for the MDPs $\mdp^{\sigma}_{\pcfp_1}$ and $\mdp^{\sigma}_{\pcfp_2}$, respectively.
\end{definition}

\begin{example}
    For all $\mathtt{N} \geq 0$, the PCFPs in \Cref{fig:coingame} (middle) and \Cref{fig:coingame} (right) with labeling functions as in \Cref{ex:potgoal} are reachability equivalent w.r.t.\ to $\goalpred = (\texttt{x}\geq \mathtt{N}  \land \texttt{f}=\mathtt{false})$.
    This follows from our intuitive explanation in \Cref{sec:example}, or alternatively from the formal rules to be presented in the folowing \Cref{sec:main}.
    \qedExample
\end{example}

\section{PCFP Reduction}
\label{sec:main}

%Our technique consists of two steps -- variable \emph{unfolding} and location \emph{elimination} -- that are typically executed in an alternating fashion. We now describe them in detail.
We now describe our two main ingredients in detail: variable \emph{unfolding} and location \emph{elimination}.
Throughout this section, $\pcfp = \pcfpinit$ denotes an arbitrary well-formed PCFP.

\subsection{Variable Unfolding}
\label{sec:unfolding}

%Our notion of unfolding is a purely syntactical operation.
%
Let $\Asg$ be the set of all assignments that occur anywhere in the updates of $\pcfp$. For an assignment $\asgn \in \Asg$, we write $\lhs(\asgn)$ for the variable on the left-hand side and $\rhs(\asgn)$ for the expression on the right-hand side.
Let $x, y \in \Var$ be arbitrary. Define the relation $x \depon y$ (``$x$ depends on $y$'') as
\[
	x \depon y \qquad \iff \qquad \exists \asgn \in \Asg \colon\quad x = \lhs(\asgn) \quad\wedge\quad \rhs(\alpha) \text{ contains } y~.
\]
This syntactic dependency relation only takes updates but no guards into account.
This is, however, sufficient for our purpose.
We say that $x$ is \emph{(directly) unfoldable} if $\forall y \colon\, x \depon y \implies x = y$, that is, $x$ depends at most on itself.
\begin{example}
	Variables $\texttt{x}$ and $\texttt{f}$ in the PCFP in \Cref{fig:coingame} (left) are unfoldable.
    \qedExample
\end{example}
The rationale of this definition is as follows:
If variable $x$ is to be unfolded into the location space, then we must make sure that any update assigning to $x$ yields an explicit numerical value and hence an unambiguous location.
Formally, unfolding is defined as follows:

\begin{definition}[Unfolding]
	\label{def:unfolding}
    Let $x \in \Var$ be unfoldable. The \emph{unfolding} $\Unf(\pcfp, x)$ of $\pcfp$ with respect to $x$ is the PCFP $(\Loc',\, \Var \setminus \{x\},\, \dom,\, \Cmd',\, \iota')$ where
	\[
		\Loc' ~=~ \Loc \,\times\, \dom(x),
        \qquad
        \iota' ~=~ (\, \langle \, l_\iota,\, \val_\iota(x) \,\rangle,\, \val_\iota' \,)
	\]
	where $\val_\iota'(x) = \val_\iota(x)$ for all $x \in \Var'$, and $\Cmd'$ is defined according to the rule
	\[
		\frac{l \xrightarrow{\guard \to p:\update} l' \text{ in } \pcfp  \quad\wedge\quad \val\colon \{x\} \to  \dom(x)}{\langle\, l,\, \val(x) \,\rangle \,\xrightarrow{\guard[\val] \,\to\, p:\update[\val]}\, \langle\, l',\, \update(\val)(x) \,\rangle} ~.
	\]
\end{definition}
Recall that $\update[\val]$ substitutes all $x$ in $\update$ for $\val(x)$ while $\update(\val)$ applies $\update$ to valuation $\val$.
Note that even though $\val$ only assigns a value to $x$ in the above rule, we nonetheless have that $\update(\val)(x)$ is a well-defined integer in $\dom(x)$.
This is ensured by the definition of unfoldable and because $\pcfp$ is well-formed.
Unfolding preserves the semantics of a PCFP (up to renaming of states and action labels):

\begin{lemma}
\label{lemma:unfoldsemantics}
	For every unfoldable $x \in \Var$, we have $\mdp_{\Unf(\pcfp, x)} = \mdp_{\pcfp}$.
\end{lemma}
\begin{example}
	The PCFP in \Cref{fig:coingame} (middle) is the unfolding $\Unf(\pcfp_{\mathit{game}}, \mathtt{f})$ of the PCFP $\pcfp_{\mathit{game}}$ in \Cref{fig:coingame} (left) with respect to variable $\mathtt{f}$.
    \qedExample
\end{example}
In general, it is possible that no single variable of a PCFP is unfoldable. We offer two alternatives for such cases:
\begin{itemize}
	\item There always exists a \emph{set} $U \subseteq \Var $ of variables that can be unfolded \emph{at once} ($U = \Var$ in the extreme case). \Cref{def:unfolding} can be readily adapted to this case. Preferably small sets of unfoldable variables can be found by considering the bottom SCCs of the directed graph $(\Var, \depon)$.
	\item In principle, each variable can be made unfoldable by introducing further commands. Consider for instance a command $\gamma$ with an update $x' = y$. We may introduce $|\dom(y)|$ new commands by strengthening $\gamma$'s guard with condition ``$y = z$'' for each $z \in \dom(y)$ and substituting all occurrences of $y$ for the constant $z$. This transformation is mostly of theoretical interest as it may create a large number of new commands.
\end{itemize}

\subsection{Elimination}

For the sake of illustration, we first recall state elimination in Markov chains.
Let $s$ be a state of the Markov chain.
The first step is to eliminate all self-loops of $s$ by rescaling the probabilities accordingly (\Cref{fig:mc_state_elim}, left).
Afterwards, all ingoing transitions are redirected to the successor states of $s$ by multiplying the probabilities along each possible path (\Cref{fig:mc_state_elim}, right).
The state $s$ is then not reachable anymore and can be removed.
This preserves reachability probabilities in the Markov chain provided that $s$ was neither an initial nor goal state.
Note that state elimination may increase the total number of transitions.
In essence, state elimination in Markov chains is an automata-theoretic interpretation of solving a linear equation system by Gaussian elimination~\cite{param-synth-arxiv}.

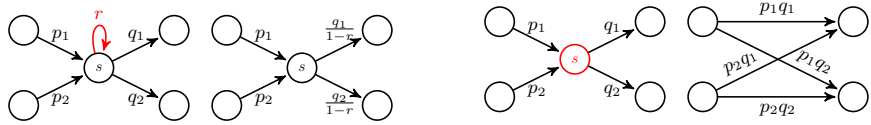
\begin{figure}[t]
	\centering
	\begin{tikzpicture}[myArrowStyle, node distance = 5mm and 10mm,on grid, every state/.style={scale=0.6}, every node/.style={scale=0.8}]
	\node[state] (s) {\large $s$};
	\node[state, above left=of s] (s1) {};
	\node[state, below left=of s] (s2) {};
	\node[state, above right=of s] (s3) {};
	\node[state, below right=of s] (s4) {};
	
	\draw[->] (s) edge[loop above, red] node[above] {$r$} (s);
	\draw[->] (s1) -- node[above] {$p_1$} (s);
	\draw[->] (s2) -- node[below] {$p_2$} (s);
	\draw[->] (s) -- node[above] {$q_1$} (s3);
	\draw[->] (s) -- node[below] {$q_2$} (s4);
	
	\node[state,right = 27mm of s] (s) {\large $s$};
	\node[state, above left=of s] (s1) {};
	\node[state, below left=of s] (s2) {};
	\node[state, above right=of s] (s3) {};
	\node[state, below right=of s] (s4) {};
	
	\draw[->] (s1) -- node[above] {$p_1$} (s);
	\draw[->] (s2) -- node[below] {$p_2$} (s);
	\draw[->] (s) -- node[above] {$\frac{q_1}{1-r}$} (s3);
	\draw[->] (s) -- node[below] {$\frac{q_2}{1-r}$} (s4);
	\end{tikzpicture}
	\hspace{10mm}
	\begin{tikzpicture}[myArrowStyle, node distance = 5mm and 10mm,on grid, every state/.style={scale=0.6}, every node/.style={scale=0.8}]
	\node[state,red] (s) {\large $s$};
	\node[state, above left=of s] (s1) {};
	\node[state, below left=of s] (s2) {};
	\node[state, above right=of s] (s3) {};
	\node[state, below right=of s] (s4) {};
	
	\draw[->] (s1) -- node[above] {$p_1$} (s);
	\draw[->] (s2) -- node[below] {$p_2$} (s);
	\draw[->] (s) -- node[above] {$q_1$} (s3);
	\draw[->] (s) -- node[below] {$q_2$} (s4);
	
	\node[right = 27mm of s] (s) {};
	\node[state, above left=of s] (s1) {};
	\node[state, below left=of s] (s2) {};
	\node[state, above right=of s] (s3) {};
	\node[state, below right=of s] (s4) {};
	
	\draw[->] (s1) -- node[above] {$p_1q_1$} (s3);
	\draw[->] (s2) -- node[above,sloped,near start] {$p_2q_1$} (s3);
	\draw[->] (s1) -- node[above,sloped, near end] {$p_1q_2$} (s4);
	\draw[->] (s2) -- node[below] {$p_2q_2$} (s4);
	\end{tikzpicture}
%    \begin{tikzpicture}[node distance = 5mm and 10mm,on grid, every state/.style={scale=0.4}, every node/.style={scale=0.8}]
%    \node[state] (s) {};
%    \node[state,  left=of s] (s1) {};
%    \node[state, above right=of s] (s3) {};
%    \node[state, below right=of s] (s4) {};
%    
%    \draw[->, red] (s1) -- node[above] {$p_1$} (s);
%    \draw[->] (s) -- node[above] {$q_1$} (s3);
%    \draw[->] (s) -- node[below] {$q_2$} (s4);
%    
%    \node[right = 27mm of s] (s) {};
%    \node[state, left=of s] (s1) {};
%    \node[state, above right=of s] (s3) {};
%    \node[state, below right=of s] (s4) {};
%    
%    \draw[->] (s1) -- node[above] {$p_1q_1$} (s3);
%    \draw[->] (s1) -- node[above,sloped, near end] {$p_1q_2$} (s4);
%    \end{tikzpicture}
	\caption{State elimination in Markov chains. Left: Elimination of a self-loop. Right: Elimination of a state without self-loops. These rules preserve reachability probabilities provided that $s$ is neither initial nor a goal state.}
	\label{fig:mc_state_elim}
\end{figure}

In the rest of this section, we develop a \emph{location elimination rule for PCFPs} that generalizes state elimination in Markov chains.
Updates and guards are handled by weakest precondition reasoning which is briefly recalled below.
We then introduce a rule to remove single transitions, and show how it can be employed to eliminate \emph{self-loop-free}  locations.
%Finally, we briefly consider the (more difficult) case of self-loop elimination.
%The latter is only included for theoretical interest.
%In our experiments, we have observed that self-loop elimination is rarely applicable.
For the (much) more difficult case of self-loop elimination, we refer to \iftoggle{arxiv}{\Cref{app:selfloops}}{the full version \cite{arxiv}} for the treatment of some special cases.
Handling general loops requires finding loop invariants which is notoriously difficult to automize.
Instead, the overall idea of this paper is to \emph{create self-loop-free locations by suitable unfolding}.

\subsubsection{Weakest Preconditions}
As mentioned above, our elimination rules rely on classical weakest preconditions which are defined as follows.
Fix a set $\Var$ of program variables with domains $\dom$.
Further, let $\update$ be an update and $\guard, \altguard$ be predicates over $\Var$.
We call $\hoare{\altguard}{\update}{\guard}$ a valid \emph{Hoare-triple} if
\[
\forall \val \in \dom \colon \quad \val \models \altguard \quad\implies\quad \update(\val) \models \guard~.
\]
The predicate $\wp(\update, \guard)$ is defined as the weakest $\altguard$ such that $\hoare{\altguard}{\update}{\guard}$ is a valid Hoare-triple and is called the \emph{weakest precondition} of $\update$ with respect to postcondition $\guard$.
Here, \enquote{weakest} is to be understood as \emph{maximal} in the semantic implication order on predicates.
Note that $\update(\val) \models \guard$ iff $\val \models \wp(\update, \guard)$.
It is well known~\cite{DBLP:books/ph/Dijkstra76} that for an update $\update = \set{ x_1' = f_1, ~\ldots,~ x_n' = f_n }$, the weakest precondition is given by
\[
\wp(\,\update,\, \guard\,) \quad=\quad \guard[\,x_1,\ldots,x_n \,\mapsto\, f_1,\ldots,f_n  \,] ~,
\]
i.e., all free occurrences of the variables $x_1,\ldots,x_n$ in $\guard$ are \emph{simultaneously} replaced by the expressions $f_1,\ldots,f_n$.
For example,
\[
\wp(\,\set{x' = y^2, y'= 5},\, x \geq y\,) \quad=\quad y^2 \geq 5 ~.  
\]
For chained updates $\update_1\seq\update_2$, we have $\wp(\update_1\seq\update_2, \guard) = \wp(\update_1, \wp(\update_2, \guard))$~\cite{DBLP:books/ph/Dijkstra76}.

\subsubsection{Transition Elimination}

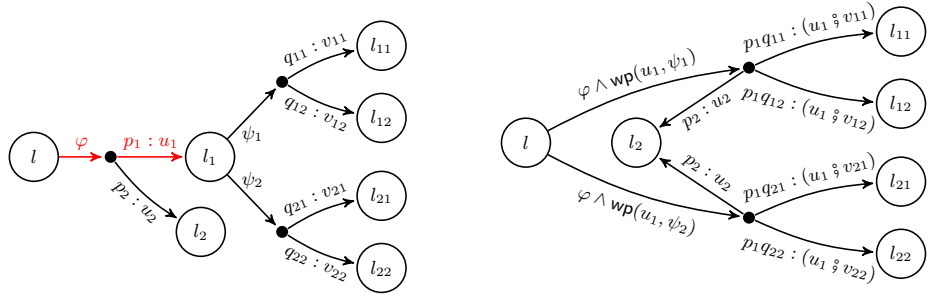
\begin{figure}[t]
	\centering
	\begin{tikzpicture}[myArrowStyle, node distance = 7mm and 9mm, every node/.style={scale=0.8}]
		\node[state] (lhat) {$l$};
%		\node[below =of lhat] (dots) {$\dots$};
		\node[circle, inner sep=2pt, right=6mm of lhat, fill=black] (phi1dot) {};
		\node[state, right=of phi1dot] (l1) {$l_1$};
		\node[state, below right=of phi1dot] (l2) {$l_2$};
		\node[circle, inner sep=2pt, above right=of l1, xshift=-3mm, fill=black] (psi1dot) {};
		\node[circle, inner sep=2pt, below right=of l1, xshift=-3mm, fill=black] (psi2dot) {};
		\node[state,  right=of psi1dot, yshift=6mm] (l11) {$l_{11}$};
		\node[state,  right=of psi1dot, yshift=-6mm] (l12) {$l_{12}$};
		\node[state,  right=of psi2dot, yshift=6mm] (l21) {$l_{21}$};
		\node[state,  right=of psi2dot, yshift=-6mm] (l22) {$l_{22}$};
		
		\draw[->,red] (lhat) edge node[above] {$\guard$} (phi1dot);
%		\draw[->] (lhat) edge node[left] {$\guard_2$} (dots);
		
		\draw[->,red] (phi1dot) edge node[above] {$p_1:\update_1$} (l1);
		\draw[->] (phi1dot) edge[bend right=10] node[below, sloped] {$p_2:\update_2$} (l2);
		
		\draw[->] (l1) edge node[below,inner sep=2mm] {$\altguard_1$} (psi1dot);
		\draw[->] (l1) edge node[above,inner sep=2mm] {$\altguard_2$} (psi2dot);
		
		\draw[->] (psi1dot) edge[bend left=10] node[above,sloped] {$q_{11}:v_{11}$} (l11);
		\draw[->] (psi1dot) edge[bend right=10] node[below,sloped] {$q_{12}:v_{12}$} (l12);
		\draw[->] (psi2dot) edge[bend left=10] node[above,sloped] {$q_{21}:v_{21}$} (l21);
		\draw[->] (psi2dot) edge[bend right=10] node[below,sloped] {$q_{22}:v_{22}$} (l22);
	\end{tikzpicture}
	\hfill
	\begin{tikzpicture}[myArrowStyle, node distance = 7mm and 12mm, , every node/.style={scale=0.8}]
		\node[state] (lhat) {$l$};
%		\node[below =of lhat] (dots) {$\dots$};
		
		\node[state, right=8mm of lhat] (l2) {$l_2$};
	
		\node[circle, inner sep=2pt, above right=of l2, fill=black] (psi1dot) {};
		\node[circle, inner sep=2pt, below right=of l2, fill=black] (psi2dot) {};
		\node[state,  right=16mm of psi1dot, yshift=6mm] (l11) {$l_{11}$};
		\node[state,  right=16mm of psi1dot, yshift=-6mm] (l12) {$l_{12}$};
		\node[state,  right=16mm of psi2dot, yshift=6mm] (l21) {$l_{21}$};
		\node[state,  right=16mm of psi2dot, yshift=-6mm] (l22) {$l_{22}$};
		
%		\draw[->] (lhat) edge node[left] {$\guard_2$} (dots);
		
		\draw[->] (lhat) edge[bend left=10] node[above,sloped] {$\guard \wedge \wp(\update_1, \altguard_1)$} (psi1dot);
		\draw[->] (lhat) edge[bend right=10] node[below,sloped] {$\guard \wedge \wp(\update_1, \altguard_2)$} (psi2dot);
		
		\draw[->] (psi1dot) edge node[below,sloped] {$p_2:\update_2$} (l2);
		\draw[->] (psi2dot) edge node[above,sloped] {$p_2:\update_2$} (l2);

		\draw[->] (psi1dot) edge[bend left=10] node[above,sloped] {$p_1 q_{11}:(u_1\seq v_{11})$} (l11);
		\draw[->] (psi1dot) edge[bend right=10] node[below,sloped] {$p_1 q_{12}:(u_1\seq v_{12})$} (l12);
		\draw[->] (psi2dot) edge[bend left=10] node[above,sloped] {$p_1 q_{21}:(u_1\seq v_{21})$} (l21);
		\draw[->] (psi2dot) edge[bend right=10] node[below,sloped] {$p_1 q_{22}:(u_1\seq v_{22})$} (l22);
	\end{tikzpicture}
	\caption{Transition elimination in PCFPs. Transition $l \xrightarrow{\guard \to p_1:\update_1} l_1$ is eliminated. The rule is correct even if the depicted locations are not pairwise distinct.}
	\label{fig:trans_elim}
\end{figure}

To simplify the presentation, we focus on the case of \emph{binary} PCFPs where locations have exactly two commands and commands have exactly two transitions (the general case is treated in \iftoggle{arxiv}{\Cref{app:general-elim}}{\cite{arxiv}}).
The following construction is depicted in \Cref{fig:trans_elim}.
Let $l \xrightarrow{\guard \to p_1:\update_1}l_1$ be the transition we want to eliminate and suppose that it is part of a command
\begin{align}
    \label{eq:cmdgamma}
	\gamma : \qquad l,~ \guard \quad\to\quad p_1 \colon \update_1 \colon l_1 ~+~ p_2 \colon \update_2 \colon l_2~.
\end{align}
Suppose that the PCFP is in a configuration $\langle l, \val \rangle$ where guard $\guard$ is enabled, i.e., $\val \models \guard$.
Intuitively, to remove the desired transition, we must jump with probability $p_1$ directly from $l$ to one of the possible destinations of $l_1$, i.e., either $l_{11}, l_{12}, l_{21}$ or $l_{22}$.
Moreover, we need to anticipate the---possibly non-deterministic---choice at $l_1$ already at $l$.
Note that guard $\altguard_1$ will be enabled at $l_1$ iff $\update_1(\val) \models \altguard_1$.
The latter is true iff $\val \models \wp(\update_1, \altguard_1)$.
Hence, if $\val \models \guard \land \wp(\update_1, \altguard_1)$, then we can choose to jump from $l$ directly to $l_{11}$ or $l_{12}$ with probability $p_1$.
The exact probabilities $p_1q_{11}$ and $p_1q_{12}$, respectively, are obtained by simply multiplying the probabilities along each path.
To preserve the semantics, we must also execute the updates found on these paths in the right order, i.e., either $\update_1\seq v_{11}$ or $\update_1\seq v_{12}$.
The situation is completely analogous for the other command with guard $\altguard_2$.

%The exact combination of possible target locations is determined by the interaction of $\update_1$ with the guards of the commands available at $l_1$, say $\altguard_1$ and $\altguard_2$.
%More specifically, if after execution of $\update_1$ the guard $\altguard_1$ holds, then we know that with probability $p_1$, we must move to either 
%In other words, at location $l$ we must consider the \emph{weakest precondition of $\update_1$ w.r.t.\ $\altguard_1$} ($\altguard_2$, resp.) in order to predict the guard evaluations at $l_1$ and thus the probability distribution over successor locations.
In summary, we apply the following transformation:
We remove the command $\gamma$ in \eqref{eq:cmdgamma} completely (and hence not only the transition $l \xrightarrow{\guard \to p_1:\update_1}l_1$) and replace it by \emph{two new commands} $\gamma_1$ and $\gamma_2$ which are defined as follows:
\begin{align*}
\gamma_i : \quad l,~ \guard \wedge \wp(\update_1, \altguard_i) ~\to~  p_2 \colon \update_2 \colon l_2 ~+~ \sum_{j=1}^2 p_1 q_{ij} \colon (\update_1 \seq \altupdate_{ij}) \colon l_{ij},\quad i \in \{1,2\}~.
\end{align*}
Note that in particular, this operation preserves deterministic PCFPs: If $\altguard_1$ and $\altguard_2$ are mutually exclusive, then so are $\wp(\update_1, \altguard_1)$ and $\wp(\update_1, \altguard_2)$. If the guards are not exclusive, then the construction transfers the non-deterministic choice from $l_1$ to $l$.

%\begin{definition}[Transition elimination]
%	The operation of replacing a command $\gamma$ by $\gamma_1',\ldots,\gamma_m'$ as above is called \emph{transition elimination} of $\hat{l} \xrightarrow{\guard \to p:\update} l$.
%\end{definition}
\renewcommand{\tt}{\texttt}
\begin{example}
	\label{ex:trans_elim}
	In the PCFP in \Cref{fig:coingame} (middle), we eliminate the transition
	\[\mathtt{
		!f \xrightarrow{0<x<N~\to~1/2 : nop} f ~.
	}\]
	The above transition is contained in the command
    \[
        \texttt{!f, 0\,<\,x\,<\,N} \quad\to\quad \tt{1/2} : \tt{nop} : \tt{f} ~+~ \tt{1/2} : \tt{x--} : \tt{!f} ~~.
    \]
	The following two commands are available at location $\mathtt{f}$:
	\begin{align*}
         \texttt{f, x=0\,|\,x\,>=\,N} & \quad\to\quad \tt{1} : \tt{nop} : \tt{!f} \\
		 \texttt{f, 0\,<\,x\,<\,N} & \quad\to\quad \tt{1/2} : \tt{x+=2} : \tt{!f} ~+~ \tt{1/2} : \tt{x--} : \tt{!f} ~~.
	\end{align*}
	Note that $\wp(\mathtt{nop}, \altguard) = \altguard$ for any guard $\altguard$. According to the construction in \Cref{fig:trans_elim}, we add the following two new commands to location $\mathtt{!f}$:
	\begin{align*}
	 \tt{!f, 0\,<\,x\,<\,N \& (\,x=0\,|\,x\,>=\,N\,)} & \quad\to\quad  \tt{1/2} : \tt{nop} : \tt{!f} ~+~ \tt{1/2} : \tt{x--} : \tt{!f} \\ 
	 \tt{!f, 0\,<\,x\,<\,N \& 0\,<\,x\,<\,N} & \quad\to\quad \tt{1/2} : \tt{x--} : \tt{!f} ~+~ \tt{1/4} : \tt{x--} : \tt{!f}\\
									& \quad\phantom{\to}\quad ~+~ \tt{1/4} : \tt{x=x+2} : \tt{!f}~~.
	\end{align*}
	The guard of the first command is unsatisfiable so that the whole command can be discarded.
    The second command can be further simplified to \[
		\tt{!f, 0\,<\,x\,<\,N} \quad\to\quad \tt{3/4} : \tt{x--} : \tt{!f} ~+~ \tt{1/4} :  \tt{x=x+2} : \tt{!f}~~.
	\]
	Removing unreachable locations yields the PCFP in \Cref{fig:coingame}~(right).
    \qedExample
\end{example}

Regarding the correctness of transition elimination, the intuitive idea is that the rule preserves reachability probabilities if location $l_1$ is \emph{not} a potential goal.
Recall that potential goals are locations for which we do not know whether they contain goal states when fully unfolded.
Formally, we have the following:

\begin{lemma}
	\label{lem:trans_elim_correct}
	Let $l_1 \in \Loc \setminus \{l_\iota\}$ be no potential goal with respect to goal predicate $\goalpred$ and let $\pcfp'$ be obtained from $\pcfp$ by eliminating transition $l \xrightarrow{\guard \to p_1:\update_1} l_1$ according to \Cref{fig:trans_elim}.
    Then $\pcfp$ and $\pcfp'$ are $\goalpred$-reachability equivalent.
%    Moreover, if $\pcfp$ is deterministic, then so is $\pcfp'$.
\end{lemma}
\begin{proof}[Sketch]
    This follows by extending Markov chain transition elimination to MDPs and noticing that the semantic MDP $\mdp_{\pcfp'}$ is obtained from $\mdp_{\pcfp}$ by applying transition elimination repeatedly\iftoggle{arxiv}{ (see \Cref{proof:trans_elim_correct})}{, see \cite{arxiv} for the details}.
    \qedProof
\end{proof}

\subsubsection{Location Elimination}
We say that location $l \in  \Loc$ has a \emph{self-loop} if there exists a transition $l \xrightarrow{\guard\to p: \update} l$.
In analogy to state elimination in Markov chains, we can directly remove any location \emph{without self-loops} by applying the elimination rule to its ingoing transitions.
However, the case $l_1 = l_2$ in \Cref{fig:trans_elim} needs to be examined carefully as eliminating $l \xrightarrow{\guard\to p_1: \update_1} l_1$ actually \emph{creates two new} ingoing transitions to $l_1 = l_2$.
Termination of the algorithm is thus not immediately obvious.
Nonetheless, even for general (non-binary) PCFPs, the following holds:

\begin{theorem}[Correctness of Location Elimination]
	\label{thm:locelim}
	If $l \in \Loc \setminus \{l_\iota\}$ has no self-loops and is not a potential goal w.r.t.\ goal predicate $\goalpred$, then the algorithm
	\[
		\emph{\textbf{while }} (\exists~ l' \xrightarrow{\guard\to p: \update} l \text{ in }\pcfp) ~~ \{~ \text{eliminate } l' \xrightarrow{\guard\to p: \update} l ~\}
	\]
	terminates with a $\goalpred$-reachability equivalent PCFP $\pcfp'$ where $l$ is unreachable.
\end{theorem}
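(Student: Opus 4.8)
The plan is to split the statement into two parts: \emph{partial correctness} (every iteration preserves $\mathbb{P}(\reach G_\goalpred)$) and \emph{termination together with unreachability of $l$}. Partial correctness I would get directly from Lemma~\ref{lem:trans_elim_correct}. Each iteration eliminates an ingoing transition $l' \xrightarrow{\guard\to p:\update} l$, so $l$ plays the role of ``$l_1$'' in that lemma, and the required hypothesis is met throughout: being a potential goal depends only on the labeling $L(l)$ and $\goalpred$, neither of which is touched by transition elimination, so $l$ stays a non-potential-goal. A straightforward induction on the number of iterations then yields $\mathbb{P}_{\pcfp}(\reach G_\goalpred) = \mathbb{P}_{\pcfp'}(\reach G_\goalpred)$ for the PCFP $\pcfp'$ obtained after any number of steps, in particular upon termination.

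The main obstacle is termination, because a single step need \emph{not} decrease the number of ingoing transitions of $l$. Eliminating one transition of a command $\gamma$ at some source $l'$ (necessarily $l' \neq l$, since $l$ has no self-loop) replaces $\gamma$ by one new command per outgoing command of $l$. Since $l$ is self-loop-free—and stays so, because elimination only ever rewrites commands at source locations, never the outgoing commands of $l$—none of the freshly created ``compressed'' destinations is $l$; the only transitions to $l$ that survive in the new commands are the \emph{sibling} destinations of $\gamma$ that already pointed to $l$. Writing $G$ for the (constant) number of outgoing commands of $l$, a command with $k \geq 1$ transitions to $l$ is thus replaced by $G$ commands each having exactly $k-1$ transitions to $l$, which can increase the raw transition count. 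To obtain a genuinely decreasing measure I would use the weighted quantity $\Phi = \sum_{\gamma} (G+1)^{k_\gamma}$, the sum ranging over all commands $\gamma$ with $k_\gamma \geq 1$ transitions to $l$. A single step changes $\Phi$ by $-(G+1)^k + G\,(G+1)^{k-1} = -(G+1)^{k-1} < 0$ when $k \geq 2$, and by $-(G+1) < 0$ when $k = 1$; in either case $\Phi$ strictly decreases by a positive integer. As $\Phi$ is a non-negative integer, the loop must terminate.

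It remains to read off unreachability. Upon termination the loop guard fails, i.e.\ no command contains a transition to $l$ (equivalently $\Phi = 0$), so $l$ has no ingoing transition in $\pcfp'$. Since $l \neq l_\iota$, no state $\langle l, \val\rangle$ is reachable from $\iota$ in $\mdp_{\pcfp'}$: by Def.~\ref{def:mdpsemantics} any reachable $\langle l, \val\rangle$ would require an incoming PCFP transition into $l$, and it cannot coincide with $\iota$. Hence $l$ is unreachable, which together with partial correctness proves the theorem. The three hypotheses play distinct roles, which I would point out explicitly: \emph{not a potential goal} licenses the per-step appeal to Lemma~\ref{lem:trans_elim_correct}; \emph{no self-loop} guarantees that elimination creates no transition back to $l$, which is exactly what makes $\Phi$ decrease; and $l \neq l_\iota$ ensures that the ``unreachable'' conclusion is compatible with reachability being measured from the initial state $\iota$.
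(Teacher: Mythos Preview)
Your proof is correct and follows essentially the same approach as the paper: partial correctness via Lemma~\ref{lem:trans_elim_correct}, and termination via the observation that eliminating an ingoing transition whose command has $k$ destinations equal to $l$ produces new commands with only $k-1$ such destinations (the paper calls $k$ the \emph{multiplicity}). The only difference is packaging: the paper argues informally that ``the multiplicity strictly decreases,'' whereas you make the well-founded measure explicit with the potential $\Phi = \sum_\gamma (G+1)^{k_\gamma}$, which is a clean way to turn the implicit multiset argument into a numeric one and also handles general $G$ rather than the paper's illustrative $G=2$.
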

The following notion is helpful for proving termination of the above algorithm:
\begin{definition}[Transition Multiplicity]
    \label{def:mult}
    Given a transition $l' \xrightarrow{\guard\to p: \update} l$ contained in command $\gamma$, we define its \emph{multiplicity} $m$ as the total number of transitions in $\gamma$ that also have destination $l$.
\end{definition}
For instance, if $l_1 = l_2$ in \Cref{fig:trans_elim}, then transition $l \xrightarrow{\guard\to p_1: \update_1} l_1$ has multiplicity $m = 2$.
If $l_1 \neq l_2$, then it has multiplicity $m = 1$.
\begin{proof}[of \Cref{thm:locelim}]
	With \Cref{lem:trans_elim_correct} it only remains to show termination.
    We directly prove the general case where $\pcfp$ is non-binary.
    Suppose that $l$ has $k$ commands.
    Eliminating a transition entering $l$ with multiplicity $1$ does not create any new ingoing transitions (as $l$ has no self-loops).
    On the other hand, eliminating a transition with multiplicity $m > 1$ creates $k$ new commands, each with $m-1$ ingoing transitions to $l_1$.
    Thus, as the multiplicity strictly decreases, the algorithm terminates.
    \qedProof
\end{proof}

We now analyze the complexity of the algorithm in \Cref{thm:locelim} in detail.

\begin{restatable}[Complexity of Location Elimination]{theorem}{thmcomplexity}
    \label{thm:complexity}
   Let $l \in \Loc \setminus \{l_\iota\}$ be a location without self-loops.
   Let $k$ be the number of commands available at $l$.
   Further, let $n$ be the number of distinct commands in $\Cmd$ that have a transition with destination $l$,
   and suppose that each such transition has multiplicity at most $m$.
   Then the location elimination algorithm in \Cref{thm:locelim} applied to $l$ has the following properties:
   \begin{itemize}
       \item It terminates after at most $n(k^m {-}1)/(k{-}1)$ iterations.
       \item It creates at most $\mathcal{O}(nk^{m})$ new commands.
       \item There exist PCFPs where it creates at least $\Omega(n2^m)$ new \emph{distinct} commands with satisfiable guards.
%       \item Counting how many distinct new commands with satisfiable guards it will create is $\#P$-complete.
   \end{itemize}
\end{restatable}
\begin{proof}[Sketch]
    We only consider the case $n=1$ here, the remaining details are treated in \iftoggle{arxiv}{\Cref{proof:complexity}}{\cite{arxiv}}.
    We show the three items independently:
    \begin{itemize}
        \item The number $I(m)$ of iterations of the algorithm in \Cref{thm:locelim} applied to location $l$ satisfies the recurrence $I(1)=1$ and $I(m)= 1 + kI(m-1)$ for all $m > 1$ since eliminating a transition with multiplicity $m > 1$ yields $k$ new commands with multiplicity $m-1$ each.
        The solution of this recurrence is $I(m) = \sum_{i=0}^{m-1} k^i = (k^m {-}1)/(k{-}1)$ as claimed.
        \item For the upper bound on the number of new commands, we consider the execution of the algorithm in the following stages:
        In stage 1, there is a single command with multiplicity $m$.
        In stage $j$ for $j > 1$, the commands from the previous stage are transformed into $k$ new commands with multiplicity $m-j+1$ each.
        In the final stage $m$, there are thus $k^{m-1}$ commands with multiplicity $1$ each.
        Eliminating all of them yields $k \cdot k^{m-1} = k^m$ new commands after which the algorithm terminates.
        \item Consider the PCFP $\pcfp$ in \Cref{fig:exponential} where $k=2$.
        Intuitively, location elimination must yield a PCFP $\pcfp'$ with $2^m$ commands available at location $l'$ because every possible combination of the updates $y_i' = 1$, $i=1,\ldots,m$, may result in enabling either of the two guards at $l$.
        Indeed, for each such combination, the guard which is enabled depends on the values of $x_1,\ldots,x_m$ at location $l'$.
        Thus in the semantic MDP $\mdp_{\pcfp'}$, for every variable valuation $\val$ with $\val(y_i) = 0$ for all $i =1,\ldots,m$, the probabilities $P(\langle l', \val \rangle, \langle l_1, \vec{0} \rangle)$ are \emph{pairwise distinct}.
        This implies that $\pcfp'$ must have $2^m$ commands (with satisfiable guards) at $l'$.
        \qedProof
        %
%        \item For the lower bound, we reduce $\#SAT$ to our problem as follows:
%        Given a Boolean formula $\guard$ over variables $x_1,\ldots,x_n$, we construct the PCFP $\pcfp_\guard$ which is like the one from \Cref{fig:exponential} but uses $\guard$ instead of $true$ as a guard on the command at $l'$.
    \end{itemize}
    
\end{proof}

\newcommand{\figexponential}{
    \begin{tikzpicture}[myArrowStyle,every node/.style={scale=1.0}]
        \node[state] (l') {$l'$};
        \node[circle, fill=black, inner sep=2pt, right=of l'] (dot) {};
        \node[right=of dot,yshift=1mm] (dots) {$\vdots$};
        \node[state, right=20mm of dot] (l) {$l$};
        \node[circle, fill=black, inner sep=2pt, right= 20mm of l, yshift=5mm] (dot2) {};
        \node[state, right= 38mm of dot2] (l1) {$l_1$};
        \node[state, right= 15mm of l, yshift=-5mm] (l2) {$l_2$};
        
        \draw[->] (l') -- node[above] {$true$} (dot);
        \draw[->] (dot) edge[bend left] node[above] {$\frac{c}{2^1} \colon y_1' = 1$}(l);
        \draw[->] (dot) edge[bend right] node[below] {$\frac{c}{2^m} \colon y_m' = 1$}(l);
        \draw[->] (l) edge node[above, sloped] {$\bigvee_{i=1}^m (x_i \wedge y_i)$}(dot2);
        \draw[->] (dot2) edge node[above, sloped] {$\Set{x_i'=0,\, y_i'=0}{1{\leq} i {\leq} m}$}(l1);
        \draw[->] (l) edge node[below, sloped] {$\neg (...)$}(l2);
    \end{tikzpicture}   
}

\begin{figure}[t]
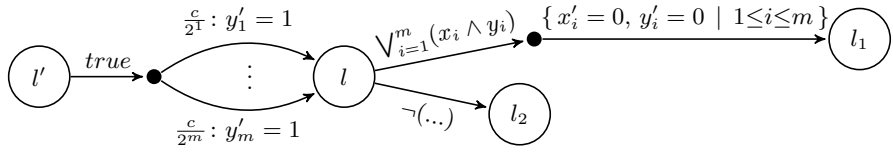

    \centering
    \figexponential
    \caption{
        The PCFP $\pcfp$ used for the lower bound in \Cref{thm:complexity}.
        The transitions from $l'$ to $l$ have multiplicity $m$ each.
        Variables $x,y$ have Boolean domain, $c$ is a normalizing constant.
    }
    \label{fig:exponential}
\end{figure}

\section{Implementation}
\label{sec:implementation}

\subsubsection{Overview}
We have implemented%
\iftoggle{arxiv}{%
    \footnote{%
        Code available at: \url{https://github.com/moves-rwth/storm/tree/master/src/storm/storage/jani/localeliminator}
    }%
}{}
our approach in the probabilistic model checker \storm~\cite{storm}.
Technically, instead of defining custom data structures for our PCFPs, we operate directly on models in the \jani\ model exchange format~\cite{jani}.
\storm\ accepts \jani\ models as input and also supports conversion from \prism\ to \jani.
The PCFPs described in this paper are a subset of the models expressible in \jani.
Other \jani\ models such as timed or hybrid automata are not in the scope of our implementation.
In practice, we use our algorithms as a \emph{simplification front-end}, i.e., we apply just a handful of unfolding and elimination steps and then fall back to \storm's default engine.
This is steered by heuristics that we explain in detail further below.
%We do not attempt to eliminate self-loops as described in \Cref{prop:idempotent_correct} as we have found these rules to be rarely applicable in practice.

\subsubsection{Features}
Apart from the basic PCFPs treated in the previous sections, our implementation supports the following more advanced \jani\ features:
\begin{itemize}
    \item \emph{Parameters.}
    It is common practice to leave key quantities in a high-level model undefined and then analyze it for various instantiations of those parameters (as done in most of the \prism\ case studies\footnote{\url{https://www.prismmodelchecker.org/casestudies/}}); or synthesize in some sense suitable parameters~\cite{DBLP:conf/cav/DehnertJJCVBKA15,DBLP:conf/atva/QuatmannD0JK16,param-synth-arxiv}.
    Examples include undefined probabilities or undefined variable bounds like \texttt{N} in the \prism\ program in \Cref{fig:coingame_prism}.
    Our approach can naturally handle such parameters and is therefore particularly useful in situations where the model is to be analyzed for several parameter configurations.
    Virtually, the only restriction is that we cannot unfold variables with parametric bounds.
    \item \emph{Rewards.}
    Our framework can be easily extended to accommodate expected-reward-until-reachability properties (see e.g.~\cite[Def.~10.71]{bk08} for a formal definition).
    The latter are also highly common in the benchmarks used in the quantitative verification literature~\cite{qcomp}.
    Formally, in a \emph{reward PCFP}, each transition is additionally equipped with a non-negative reward that can either be a constant or given as an expression in the program variables.
    Technically, the treatment of rewards is straightforward:
    Each time we multiply the probabilities of two transitions in our transition elimination rule (\Cref{fig:trans_elim}), we \emph{add} their corresponding rewards.
    \item \emph{Parallel composition.}
    PCFPs can be extended by action labels to allow for synchronization of various parallel PCFPs.
    This is standard in model checking (e.g.~\cite[Sec.~2.2.2]{bk08}).
    We have implemented two approaches for dealing with this:
    (1) A ``flat'' product model is constructed first.
    This functionality is already shipped with the \storm\ checker.
    This approach is restricted to compositions of just a few modules as the size of the resulting product PCFP is in general exponential in the number of modules.
    Nonetheless, in many practical cases, flattening leads to satisfactory results (cf.\ \Cref{sec:experiments}).
    (2) Control-flow elimination is applied to each component individually. Here, we may only eliminate \emph{internal}, i.e. non-synchronizing commands, and we forbid shared variables.
    Otherwise, we would alter the resulting composition.
    \item \emph{Probability expressions.}
    Without changes, all of the theory presented so far can be applied to PCFPs with probability expressions like $|x|/(|x|+1)$ over the program variables instead of constant probabilities only.
    Expressions that do not yield correct probabilities are considered modeling errors.
\end{itemize}

\subsubsection{Heuristics}
The choice of the next variable to be unfolded and the next location to be eliminated is driven by heuristics.
The overall goal of the heuristics is to eliminate as many locations as possible while maintaining a reasonably sized PCFP.
This is controlled by two configurable parameters, $L$ and $T$.
The heuristics alternates between unfolding and elimination (see the diagram in \Cref{fig:heu}).

To find a suitable variable for unfolding, the heuristics first analyzes the dependency graph defined in \Cref{sec:unfolding}.
It then selects a variable based on the following static analysis:
For each unfoldable variable $x$, the heuristics considers each command $\gamma$ in the PCFP and determines the percentage $p(\gamma, x)$ of $\gamma$'s transitions that have an update with writing access to $x$.
Each variable is then assigned a \emph{score} which is defined as the average percentage $p(\gamma, x)$ over all commands of the PCFP.
The intuition behind this technique is that variables which are changed in many commands are more likely to create self-loop free locations when unfolded.
We consider the percentage for each command individually in order to not give too much weight to commands with many transitions.
Unfolding is only performed if the current PCFP has at most $L$ locations.
By default, $L = 10$ which in practice often leads to unfolding just two or three variables with small domains.

After unfolding a variable, the heuristics tries to eliminate self-loop-free locations that are no potential goals.
The next location to be eliminated is selected by estimating the number of new commands that would be created by the algorithm.
Here, we rely on the theoretical results from \Cref{thm:complexity}:
In particular, we take the \emph{multiplicity} (cf.\ \Cref{def:mult}) of ingoing transitions into account which may cause an exponential blowup.
We use the estimate $\mathcal{O}(nk^m)$ from \Cref{thm:complexity} as an approximation for the elimination complexity; determining the \emph{exact} complexity of each possible elimination is highly impractical.
We only eliminate locations whose estimated complexity is at most $T$, and we eliminate those with lowest complexity first.
By default, $T = 10^4$.

\begin{figure}[t]
    \centering
    \begin{tikzpicture}[myArrowStyle,initial where=above, every node/.style={align=center,scale=0.6,inner sep=7pt},node distance=8mm and 12mm]
        \node[initial, rectangle, rounded corners, draw=black] (depgraph) {Build\\dependency\\graph};
        \node[below=of depgraph] (done) {done};
        \node[right=of depgraph,ellipse,draw=black] (loctest) {$|\Loc| < L$};
        \node[below =of loctest, rectangle, rounded corners, draw=black] (unfold) {Unfold $x \in \Var$ \\ with max. \\ score};
        \node[right=20mm of loctest, xshift=3mm, rectangle, rounded corners, draw=black] (elim) {Eliminate $l \in \Loc$ \\ with min.\ compl.};
        \node[right=20mm of unfold, ellipse, draw=black,inner sep=0.5pt] (transtest) {$\exists\, l \in \Loc \setminus \{l_{\iota}\}$ s.t.\\ -- $l$ loop-free \\ -- $l$ no pot.\ goal \\ -- est.\ compl.\ $< T$};

        \draw (depgraph) -- (loctest);
        \draw (loctest) -- node[right] {yes} (unfold);
        \draw (loctest) -- node[left] {no} (done);
        \draw (unfold) edge[bend right=20] (transtest);
        \draw (transtest) edge[bend right=20] node[above] {no} (loctest);
        \draw (transtest) edge[bend right] node[right] {yes} (elim);
        \draw (elim) edge[bend right] (transtest);
    
        \draw[black,dotted,rounded corners=10pt] 
            ($(loctest.north west) + (-7mm,7mm)$) 
         -- ($(loctest.north east) + (7mm,7mm)$)
         -- ($(unfold.south east) + (4.5mm,-3mm)$)
         -- ($(unfold.south west) + (-4.5mm,-3mm)$)
         -- cycle;
        \node[above=0mm of loctest] {\textit{Unfolding}};

        \draw[black,dotted,rounded corners=10pt] 
            ($(elim.north west) + (-5mm,5mm)$) 
         -- ($(elim.north east) + (6mm,5mm)$)
         -- ($(transtest.south east) + (6.5mm,-3.5mm)$)
         -- ($(transtest.south west) + (-6.5mm,-3.5mm)$)
         -- cycle;
        \node[above=0mm of elim] {\textit{Elimination}};
    \end{tikzpicture}
    \caption{
        Our heuristics alternates between unfolding and elimination steps.
        The next unfold is determined by selecting a variable with maximal \emph{score} as computed by a static analysis (see main text).
        Loop-free non-potential goal locations are then eliminated until the next elimination has a too high estimated complexity.
    }
    \label{fig:heu}
\end{figure}
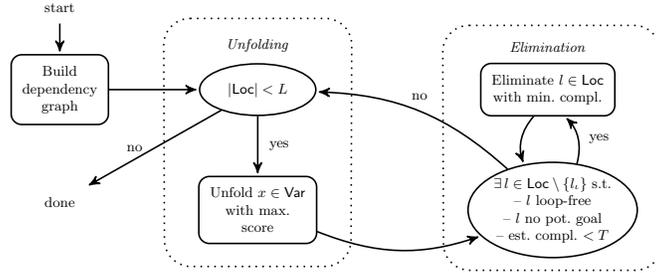

\section{Experiments}
\label{sec:experiments}

In this section, we report on our experimental evaluation of the implementation described in the previous section.

\begin{table}[t]
    \centering
    \caption{
        Reductions achieved by our control-flow elimination.
        Times are in ms.
    }
    \label{table:results}
    \begin{adjustbox}{max width=\textwidth}
        {\renewcommand{\arraystretch}{1.0}
            \setlength{\tabcolsep}{4pt}   
            \begin{tabular}{l l c  r   r  r  r  r  r  r  r  r  r r r}
                \toprule
                \multirow{2}{*}{Name} &  \multirow{2}{*}{Type} & Prop. & Red.  & \multirow{2}{*}{Params.} & \multicolumn{2}{c}{States} & \multicolumn{2}{c}{Transitions} & \multicolumn{2}{c}{Build time} & \multicolumn{2}{c}{Check time} & \multicolumn{2}{c}{Total time}  \\
                & & type & time  & &  orig. & red. & orig. & red. & orig. & red. & orig. & red. & orig. & red.  \\ \midrule
\multirow{4}{*}{\benchmark{brp}} & \multirow{4}{*}{dtmc} & \multirow{4}{*}{\benchmark{P}} & \multirow{4}{*}{\benchmark{134}}  & $2^{10}$/5 & 78.9\texttt{K} & -44\% & 106\texttt{K} & -33\% & 261 & -33\% & 22 & -38\% & \multirow{4}{*}{\benchmark{16,418}} & \multirow{4}{*}{\benchmark{-46\%}} \\
& &  &  & $2^{11}$/10 & 291\texttt{K} & -45\% & 397\texttt{K} & -33\% & 1,027 & -39\% & 101 & -46\% & & \\
& &  &  & $2^{12}$/20 & 1.11\texttt{M} & -46\% & 1.53\texttt{M} & -33\% & 3,945 & -48\% & 462 & -48\% & & \\
& &  &  & $2^{13}$/25 & 2.76\texttt{M} & -46\% & 3.8\texttt{M} & -33\% & 9,413 & -47\% & 1,187 & -47\% & & \\ \midrule
\multirow{1}{*}{\benchmark{coingame}} & \multirow{1}{*}{dtmc} & \multirow{1}{*}{\benchmark{P}} & \multirow{1}{*}{\benchmark{35}}  & $10^4$ & 20\texttt{K} & -50\% & 40\texttt{K} & -50\% & 53 & -24\% & 18,500 & -79\% & \multirow{1}{*}{\benchmark{18,553}} & \multirow{1}{*}{\benchmark{-78\%}} \\\midrule 
\multirow{1}{*}{\benchmark{dice5}} & \multirow{1}{*}{mdp} & \multirow{1}{*}{\benchmark{P}} & \multirow{1}{*}{\benchmark{671}}  & n/a & 371\texttt{K} & -84\% & 2.01\texttt{M} & -83\% & 1,709 & -82\% & 9,538 & -99\% & \multirow{1}{*}{\benchmark{11,247}} & \multirow{1}{*}{\benchmark{-91\%}} \\\midrule 
\multirow{2}{*}{\benchmark{eajs}} & \multirow{2}{*}{mdp} & \multirow{2}{*}{\benchmark{R}} & \multirow{2}{*}{\benchmark{223}}  & $10^3$ & 194\texttt{K} & -28\% & 326\texttt{K} & -1\% & 1,242 & -43\% & 220 & -32\% & \multirow{2}{*}{\benchmark{18,397}} & \multirow{2}{*}{\benchmark{-42\%}} \\
& &  &  & $10^4$ & 2\texttt{M} & -28\% & 3.38\texttt{M} & -1\% & 13,154 & -46\% & 3,780 & -31\% & & \\ \midrule
\multirow{2}{*}{\benchmark{grid}} & \multirow{2}{*}{dtmc} & \multirow{2}{*}{\benchmark{P}} & \multirow{2}{*}{\benchmark{117}}  & $10^4$ & 300\texttt{K} & -47\% & 410\texttt{K} & -34\% & 1,062 & -57\% & 17 & -52\% & \multirow{2}{*}{\benchmark{11,716}} & \multirow{2}{*}{\benchmark{-52\%}} \\
& &  &  & $10^5$ & 3\texttt{M} & -47\% & 4.1\texttt{M} & -34\% & 10,430 & -53\% & 207 & -54\% & & \\ \midrule
\multirow{1}{*}{\benchmark{hospital}} & \multirow{1}{*}{mdp} & \multirow{1}{*}{\benchmark{P}} & \multirow{1}{*}{\benchmark{57}}  & n/a & 160\texttt{K} & -66\% & 396\texttt{K} & -27\% & 502 & -50\% & 19 & -56\% & \multirow{1}{*}{\benchmark{521}} & \multirow{1}{*}{\benchmark{-39\%}} \\\midrule 
\multirow{4}{*}{\benchmark{nand}} & \multirow{4}{*}{dtmc} & \multirow{4}{*}{\benchmark{P}} & \multirow{4}{*}{\benchmark{80}}  & 20/4 & 308\texttt{K} & -79\% & 476\texttt{K} & -52\% & 589 & -45\% & 108 & -75\% & \multirow{4}{*}{\benchmark{86,060}} & \multirow{4}{*}{\benchmark{-56\%}} \\
& &  &  & 40/4 & 4\texttt{M} & -80\% & 6.29\texttt{M} & -51\% & 8,248 & -50\% & 1,859 & -77\% & & \\
& &  &  & 60/2 & 9.42\texttt{M} & -80\% & 14.9\texttt{M} & -50\% & 19,701 & -49\% & 4,685 & -76\% & & \\
& &  &  & 60/4 & 18.8\texttt{M} & -80\% & 29.8\texttt{M} & -50\% & 40,168 & -53\% & 10,703 & -77\% & & \\ \midrule
\multirow{4}{*}{\benchmark{nd-nand}} & \multirow{4}{*}{mdp} & \multirow{4}{*}{\benchmark{P}} & \multirow{4}{*}{\benchmark{106}} & 20/4 & 308\texttt{K} & -79\% & 476\texttt{K} & -52\% & 618 & -36\% & 127 & -74\% & \multirow{4}{*}{\benchmark{96,956}} & \multirow{4}{*}{\benchmark{-52\%}} \\
& &  &  & 40/4 & 4\texttt{M} & -80\% & 6.29\texttt{M} & -51\% & 8,783 & -42\% & 2,270 & -77\% & & \\
& &  &  & 60/2 & 9.42\texttt{M} & -80\% & 14.9\texttt{M} & -50\% & 21,792 & -47\% & 5,646 & -75\% & & \\
& &  &  & 60/4 & 18.8\texttt{M} & -80\% & 29.8\texttt{M} & -50\% & 44,409 & -46\% & 13,312 & -76\% & & \\ \midrule
\multirow{2}{*}{\benchmark{negotiation}} & \multirow{2}{*}{dtmc} & \multirow{2}{*}{\benchmark{P}} & \multirow{2}{*}{\benchmark{148}} & $10^4$ & 129\texttt{K} & -32\% & 184\texttt{K} & -26\% & 481 & -39\% & 22 & -49\% & \multirow{2}{*}{\benchmark{5,631}} & \multirow{2}{*}{\benchmark{-39\%}} \\
& &  &  & $10^5$ & 1.29\texttt{M} & -32\% & 1.84\texttt{M} & -26\% & 4,930 & -43\% & 197 & -30\% & & \\ \midrule
\multirow{2}{*}{\benchmark{pole}} & \multirow{2}{*}{dtmc} & \multirow{2}{*}{\benchmark{R}} & \multirow{2}{*}{\benchmark{208}} & $10^2$ & 315\texttt{K} & -46\% & 790\texttt{K} & -4\% & 1,496 & -46\% & 26 & -42\% & \multirow{2}{*}{\benchmark{17,431}} & \multirow{2}{*}{\benchmark{-45\%}} \\
& &  &  & $10^3$ & 3.16\texttt{M} & -46\% & 7.9\texttt{M} & -4\% & 15,503 & -47\% & 406 & -33\% & & \\
                \bottomrule
            \end{tabular}
        }
    \end{adjustbox}
\end{table}

\paragraph{Benchmarks.}
We have compiled a set of 10 control-flow intensive DTMC and MDP benchmarks from the literature.
Each benchmark model is equipped with a reachability or expected reward property.

\benchmark{brp} models a bounded retransmision protocol and is taken from the \prism\ benchmark suite.
\benchmark{coingame} is our running example from \Cref{fig:coingame_prism}.
\benchmark{dice5} is an example shipped with \storm\ and models rolling several dice, five in this case, that are themselves simulated by coinflips in parallel.
\benchmark{eajs} models energy-aware job scheduling and was first presented in~\cite{eajs}.
\benchmark{grid} is taken from~\cite{paynt} and represents a robot moving in a partially observable grid world.
\benchmark{hospital} is adapted from~\cite{hospital} and models a hospital inventory management problem.
%\benchmark{leader-sync} and \benchmark{nand} model a synchronous leader election protocol and a faulty hardware module, respectively, and are both from the \prism\ benchmark suite as well.
\benchmark{nand} is the von Neumann NAND multiplexing system mentioned near the end of \Cref{sec:example}.
\benchmark{nd-nand} is a custom-made adaption of \benchmark{nand} where some probabilistic behavior has been replaced by non-determinism.
\benchmark{negotiation} is an adaption of the Alternating Offers Protocol from~\cite{negotiation} which is also included in the \prism\ case studies.
\benchmark{pole} is also from~\cite{paynt} and models balancing a pole in a noisy and unknown environment.
%\benchmark{tireworld} is from the IPPC 2006 benchmark set\footnote{\url{http://idm-lab.org/wiki/icaps/ipc2006/probabilistic/}} and models navigation of a vehicle which can only recover from faults at specific service stations.
The problems \benchmark{brp}, \benchmark{eajs}, and \benchmark{nand} are part of the QComp benchmark set~\cite{qcomp}.

For all examples except \benchmark{dice5}, we have first flattened parallel compositions (if there were any) into a single module, cf.\ \Cref{sec:implementation}.

%For our experiments we considered all the ten DTMC models in the QComp benchmark set~\cite{qcomp} (the \textsc{coupon}~\cite{coupon} model was manually translated to a \prism\ program).
%We also used the coin game from Fig.~\ref{fig:coingame_prism} for further testing.
%For each model, we selected one reachability property from the benchmark properties.
%Supplementary material about the benchmarks is included in App.~\ref{app:benchmarks}.
%We found our implementation to be applicable to 6/10 benchmarks in the collection.
%The four remaining benchmarks are 
%(1) the \textsc{herman}~\cite{herman} protocol, where no single variable is unfoldable, and the dependency graph $(\Var, \depon)$ consists of a single circular BSCC,
%(2) \textsc{haddad-monmege}~\cite{haddad} with just one variable with an undefined upper bound,
%(3) \textsc{oscillators}~\cite{oscillators}, a machine-generated \prism\ file which (more or less) encodes a Markov chain \emph{explicitly}\footnote{That is, the \prism\ model is approximately as large as the Markov chain it describes.} and is thus unsuited for our approach,
%and (4) \textsc{bluetooth}~\cite{bluetooth} that includes a special initialization construct not yet supported by our prototype.

\paragraph{Setup.}
We report on two experiments.
In the first one, we compare the number of states and transitions as well as the model build and check times of the original and the reduced program (columns `States', `Transitions', `Build time', and `Check time' of \Cref{table:results}).
We work with \storm's default settings\footnote{By default, \storm\ builds the Markov model as a sparse graph data structure and uses (inexact) floating point arithmetic.}.
We also report the time needed for the reduction itself, including the time consumed by flattening (column `Red.\ time').
We always use the default configuration for our heuristics, i.e., \emph{we do not manually fine-tune} the heuristics for each benchmark.
We report on some additional experimental results obtained with fine-tuned heuristics in \iftoggle{arxiv}{\Cref{app:additional-exp}}{\cite{arxiv}}.
For the benchmarks where this is applicable, we consider the different parameter configurations given in column `Params.'.
Recall that in these cases, we need to compute the reduced program only once.
We report the amortized runtime of \storm\ on all parameter configurations vs.\ the runtime on the reduced models, including the time needed for reduction in the rightmost column `Total time'.
In the second, less extensive experiment, we compare our reductions to bisimulation minimization (\Cref{table:resultsBisim} below).
All experiments were conducted on a notebook with a 2.4GHz Quad-Core Intel Core i5 processor and 16GB of RAM.
The script for creating the table is available\footnote{\url{https://doi.org/10.5281/zenodo.5497947}}.
%The same repository also contains detailed log files that document the choices made by the heuristics.

\paragraph{Results.}
%The results of our experimental evaluation are summarized in \Cref{table:results}.
Our default heuristics was able to reduce all considered models in terms of states (by 28-84\%) and transitions (by 1-83\%).
The total time for building and checking these models was decreased by  39-91\%.
The relative decrease in the number of states is usually more striking than the decrease in the number of transitions.
This is because, as explained in \Cref{sec:main}, location elimination always removes states but may add more commands to the PCFP and hence more transitions to the underlying Markov model.
Similarly, the time savings for model checking are often higher than the ones for model building; here, this is mostly because building our reduced model introduces some overhead due to the additional commands.
The reduction itself was always completed within a fraction of a second and is independent of the size of the underlying state space.

\paragraph{Bisimulation and control-flow reduction.}
In \Cref{table:resultsBisim}, we compare the compression achieved by \storm's probabilistic bisimulation engine, our method and \emph{both} techniques combined.
We also include the total time needed for reduction, model building and checking.
For the comparison, we have selected three benchmarks representing three different situations: (1) for \benchmark{brp}, the two techniques achieve similar reductions, (2) for \benchmark{nand}, our reduced model is smaller than the bisimulation quotient, and (3) for \benchmark{pole}, the situation is the other way around, i.e., the bisimulation quotient is (much) smaller than our reduced model.
Interestingly, combining the two techniques yields an even smaller model in all three cases.
This demonstrates the fact that \emph{control-flow reduction and bisimulation are orthogonal} to each other.
In the examples, control-flow reduction was also faster than bisimulation as the latter has to process large explicit state spaces.
It is thus an interesting direction for future work to combine program-level reduction techniques that yield bisimilar models with control-flow reduction.

\begin{table}[t]
    \centering
    \caption{
        Comparison of bisimulation minimization and our control-flow reduction (`CFR').
        Column `Total time' includes building, reducing and checking the model.
    }
    \label{table:resultsBisim}
    \begin{adjustbox}{max width=\textwidth}
        {\renewcommand{\arraystretch}{0.95}
            \setlength{\tabcolsep}{4pt}   
            \begin{tabular}{l  r  r  r  r  r  r r r r r}
                \toprule
                \multirow{2}{*}{Name} & \multirow{2}{*}{Params.} & \multicolumn{3}{c}{States} & \multicolumn{3}{c}{Transitions} & \multicolumn{3}{c}{Total time}  \\
                &  &  Bisim. & CFR & both & Bisim. & CFR & both & Bisim. & CFR & both\\ \midrule
                \benchmark{brp} & $2^{12}$/20 &                     598\texttt{K} & 606\texttt{K} & 344\texttt{K} &                     852\texttt{K} & 1.02\texttt{M} & 598\texttt{K} &                     4,767 & 2,883 & 2,965 \\ \midrule 
                \benchmark{nand} & 40/4 &                     3.21\texttt{M} & 816\texttt{K} & 678\texttt{K} &                     5\texttt{M} & 3.1\texttt{M} & 2.46\texttt{M} &                     17,868 & 5,588 & 8,199 \\ \midrule 
                \benchmark{pole} & $10^3$ &                     4.06\texttt{K} & 1.72\texttt{M} & 1.2\texttt{K} &                     12.2\texttt{K} & 7.54\texttt{M} & 9.82\texttt{K} &                     19,443 & 10,305 & 10,801 \\ 
                \bottomrule
            \end{tabular}
        }
    \end{adjustbox}
\end{table}

\paragraph{When does control-flow reduction work well?}
Our technique works best for models that use one or more explicit or implicit program counters.
Such program counters often come in form of a variable that determines which commands are currently available and that is updated after most execution steps.
Unfolding such variables typically yields several loop-free locations.
For example, the variable $\mathtt{f}$ in \Cref{fig:coingame_prism} is of this kind.
However, we again stress that there is no formal difference between program counter variables and ``data variables'' in our framework.
The distinction is made automatically by our heuristics; no additional user input is required.
Control-flow reduction yields especially good results if it can be applied compositionally such as in the \benchmark{dice5} benchmark.
%This, however, requires that the single modules are relatively independent, i.e., they have a high number of internal, non-synchronizing transitions.

\paragraph{Limitations.}
Finally, we remark that our approach is less applicable to extensively synchronizing parallel compositions of more than just a handful of modules.
The flattening approach then typically yields large PCFPs which are not well suited for symbolic techniques such as ours.
Larger PCFPs also require a significantly higher model building time.
Another limiting factor are dense variable dependencies in the sense of \Cref{sec:unfolding}, i.e., the variable dependency graph has relatively large BSCCs.
The latter, however, seems to rarely occur in practice.

\section{Conclusion}
\label{sec:conclusion}

This paper presented a property-directed ``unfold and eliminate'' technique on probabilistic control-flow programs which is applicable to state-based high-level modeling languages.
It preserves reachability probabilities and expected rewards exactly and can be used as a simplification front-end for any probabilistic model checker.
It can also handle parametric DTMC and MDP models where some key quantities are left open.
On existing benchmarks, our implementation achieved model compressions of up to an order of magnitude, even on models that have much larger bisimulation quotients.
Future work is to amend this approach to continuous-time models like CMTCs and Markov automata, and to further properties such as LTL.

% DONT FORGET IN FINAL VERSION
%\paragraph*{Acknowledgement.} We thank Tim Quatmann and Matthias Volk for helpful comments regarding the implementation in \storm, and Sebastian Junges for pointing us to several benchmarks.

%
% ---- Bibliography ----
%
% BibTeX users should specify bibliography style 'splncs04'.
% References will then be sorted and formatted in the correct style.
%
 \bibliographystyle{splncs04}
 \bibliography{references}

\iftoggle{arxiv}{
    \clearpage
    \appendix
    \section{Transition Elimination: The General Case}
\label{app:general-elim}

We show how ingoing transition can be eliminated in the general case.
This generalizes the case of binary PCFPs covered by \Cref{fig:trans_elim}.

Let $\hat{l} \xrightarrow{\guard \to p:\update} l$ be an ingoing transition of $l \in \Loc$ that we wish to eliminate.
Suppose that the transition is part of a command
\[
	\gamma ~:= \qquad \hat{l},\, \guard \quad\to\quad p \colon  \update \colon l ~+~ \sum_{j=1}^n p_j \colon \update_j \colon l_j
\]
where we use the sum notation in the obvious way (note that $n = 0$ is possible when $p = 1$).
Our goal is to remove $\gamma$ completely (and thus not only the above transition) and replace it by a collection of $m$ new commands $\gamma_i'$, where $m$ is the number of commands available at location $l$.
More specifically, let
\[
	\gamma_i ~:= \qquad l,\, \altguard_i \quad\to\quad \sum_{j=1}^{m_i} q_{ij} \colon \altupdate_{ij} \colon l_{ij} 
\]
for $1 \leq i \leq m$ be one of those $m$ commands.
For all $1 \leq i \leq m$, we define the following \emph{new} commands:
\begin{align*}
	\gamma_i' ~:= \qquad\hat{l}, \, \guard \wedge \wp(u, \altguard_i) \quad\to\quad \sum_{j=1}^n p_j \colon \update_j \colon l_j ~+~ \sum_{j=1}^{m_i} p\,q_{ij} \colon (\update\seq\altupdate_{ij}) \colon l_{ij}~.
\end{align*}
After this operation, there will be a total of at most $m-1$ additional commands and $(m-1)n + \sum_{i=1}^m m_i -1$ additional transitions.

\section{Full Proofs}
\label{app:full_proofs}

\subsection{Proof of \Cref{lemma:unfoldsemantics} (Unfolding preserves the MDP semantics)}

Let $\pcfp = \pcfpinit$, $\Unf(\pcfp, x) = \pcfpinitprime$, $\mdp_{\pcfp} = \mdpinit$ and $\mdp_{\Unf(\pcfp, x)} = (S', P', \Act', \iota')$. % Let $\dom'$ be the restriction of $\dom$ to $\Var \setminus \{x\}$.

For the proof, we identify MDP states named $\langle \langle l, \val(x) \rangle, \val \rangle$ and $\langle l, (\val(x), \val) \rangle$ and denote the extension of a valuation as $(\val(x), \val') = \val$ if $\val'$ is the restriction of $\val$ to $\Var \setminus \{ x \}$.
Consequently,
\[
    S = \Loc \times \{\val \in \dom\}  = \Loc \times \dom(x) \times \{\val \in \dom' \} = \Loc' \times \{\val \in \dom' \} = S'~.
\]

In the following, we show that $P = P'$.
To this end, we regard $P$ and $P'$ as the \emph{sets} of transitions they describe.

We first show $P \subseteq P'$.
Let $\langle l, \val \rangle \xrightarrow{a_{\gamma}, p} \langle l', \update(\val) \rangle$ be a transition in $\mdp_\pcfp$.
Then, by \Cref{def:mdpsemantics}, we have $l \xrightarrow{\guard \rightarrow p : \update} l' \wedge \val \models \guard \wedge \update(\val) \in \dom$ in $\mathfrak P$.
We define $\val_x: \{x\} \to \dom(x)$ with $\val_x(x) = \val(x)$.
By the rule from \Cref{def:unfolding} it follows that
\[
    \langle\, l, \val_x(x) \,\rangle \xrightarrow{\guard[\val_x] \,\to\, p\,:\,\update[\val_x]} \langle\, l', \update(\val_x)(x) \,\rangle
\]
holds in $\Unf(\mathfrak P, x)$.
Let $\val'$ be the restriction of $\val$ to $\Var \setminus\{x\}$.
From $\val \models \guard$, it follows that $\val' \models \guard[\val_x]$.
Since $\update(\val) \in \dom$ holds in $\mathfrak P$, $u(\val') \in \dom'$ holds in $\Unf(\mathfrak P, x)$.
Applying the rule from \Cref{def:mdpsemantics} yields that
\[
    \langle\, \langle\, l, \val_x(x) \,\rangle, \val' \,\rangle \xrightarrow{a_{\gamma}, p} \langle\, \langle\, l', \update(\val_x)(x) \,\rangle, \update(\val) \,\rangle
\]
holds in $\Unf(\mathfrak P, x)$.
As $\langle \langle l, \val_x(x)\rangle, \val' \rangle =\langle l, \val \rangle$ and $\langle \langle l', u(\val_x)(x)\rangle, \update(\val') \rangle = \langle l', \update(\val) \rangle$, we have thus shown that $P \subseteq P'$.

We now show $P' \subseteq P$.
Let $\langle \langle l, \val_x(x)\rangle, \val' \rangle \xrightarrow{a_{\gamma}, p} \langle \langle l', u(\val_x)(x)\rangle, \update(\val') \rangle$ be a transition in $\Unf(\pcfp, x)$.
This implies that there exists a guard $\guard'$ over $\Var'$ such that
\[
    \langle\, l, \val_x(x) \,\rangle \xrightarrow{\guard' \rightarrow p \,:\, \update} \langle\, l', u(\val_x)(x) \,\rangle ~\wedge~ \val' \models \guard' ~\wedge~ \update(\val') \in \dom
\]
holds in $\Unf(\pcfp, x)$.
We have $u(\val_x)(x) \in \dom(x)$ because otherwise $\langle l', u(\val_x)(x) \rangle \in \Loc'$ would not hold.
We have already shown that $u(\val') \in \dom'$ holds, so for the combination $\val = (\val_x, \val')$, $\val \in \dom$ also holds.
As $\langle l, \val_x(x) \rangle \xrightarrow{\guard' \rightarrow p : \update} \langle l', u(\val_x)(x) \rangle$ is a transition in $\Unf(\pcfp, x)$, by \Cref{def:unfolding} there is a $\guard$ with $\guard' = \guard[\val_x]$ such that $l \xrightarrow{\guard \rightarrow p : \update} l'$ holds in $\pcfp$.
Additionally, because $\val' \models \guard'$ and $\guard' = \guard[\val_x]$, it holds that $\val \models \guard$.
By the rule from \Cref{def:mdpsemantics}, $\langle l, \val \rangle \xrightarrow{a_{\gamma}, p} \langle l', \update(\val) \rangle$ must therefore be a transition in $\mdp_{\pcfp}$.

Finally, notice that $\iota = \langle l_\iota, \val_\iota \rangle = \langle l_\iota, \val_\iota(x), \val'_\iota\rangle = \langle \langle l_\iota, \val_\iota(x) \rangle, \val'_\iota \rangle = \iota'$.

\qedProof

\subsection{Proof of \Cref{lem:trans_elim_correct} (Transition elimination is correct)}
\label{proof:trans_elim_correct}

\newcommand{\conf}[2]{\langle\, #1, #2 \,\rangle}

We formally prove that the transition elimination rule from \Cref{fig:trans_elim} is correct in the sense of \Cref{lem:trans_elim_correct}, i.e., that it yields a reachability equivalent PCFP.

First, we argue that state elimination in Markov chains (\Cref{fig:mc_state_elim}) can also be extended to MDPs.
We explain the corresponding rule for binary MDPs (each state has two available actions and each action leads to at most 2 two distinct successor states; the generalization to arbitrary MDPs is straightforward).
Let $\mdp = \mdpinit$ be a (binary) MDP.
Let $s_1 \in S$ and suppose we want to eliminate the ingoing transition $\mdptrans{s}{\alpha}{p_1}{s_1}$ (see \Cref{fig:mdp-elim}).
To this end, we also remove the other transition $\mdptrans{s}{\alpha}{p_2}{s_2}$, and introduce the new action labels $\alpha;\gamma$ and $\alpha;\delta$ and the following new transitions:
\begin{align*}
    &\mdptrans{s}{\alpha;\gamma}{p_2}{s_2}
    \qquad \mdptrans{s}{\alpha;\gamma}{p_1q_{11}}{s_{11}}
    \qquad \mdptrans{s}{\alpha;\gamma}{p_1q_{12}}{s_{12}}\\
    &\mdptrans{s}{\alpha;\delta}{p_2}{s_2}
    \qquad \mdptrans{s}{\alpha;\delta}{p_1q_{21}}{s_{21}}
    \qquad \mdptrans{s}{\alpha;\delta}{p_1q_{22}}{s_{22}} ~.
\end{align*}
Let $G \subseteq S$ and let $\mdp'$ be the MDP resulting from $\mdp$ by applying the above transformation.
We claim that $\max_{\sigma}\prob_{\mdp^\sigma}(\reach G) = \max_{\sigma} \prob_{\mdp'^\sigma}(\reach G)$ provided that $s_1 \notin G$ and $s_1 \neq \sinit$, i.e., $s_1$ is neither an initial state nor a goal state (the proof is analogous for minimal reachability probabilities).
Here, the maximum ranges over all memoryless and deterministic schedulers of $\mdp$ and $\mdp'$, respectively.
Recall from \Cref{sec:prelims} and \cite{puterman} that this class of schedulers suffices for maximal and minimal reachability probabilities.
To show our claim let $\sigma$ be a maximizing scheduler in $\mdp$.
W.l.o.g.\ assume that $\sigma(s) = \alpha$ since otherwise there is nothing to show because the transition we eliminate is never actually chosen.
Further, suppose that $\sigma(s_1) = \gamma$ (the other case $\sigma(s_1) = \delta$ is symmetric).
%Then in the induced Markov chain $\mdp^\sigma$, the state $s_2$ is reached in one step with probability $p_2$ and the states $s_{11}$ and $s_{12}$ are reached in two steps with probabilities $p_1 q_{11}$ and $p_1 q_{12}$, respectively.
We now define a scheduler $\sigma'$ in $\mdp'$ that is like $\sigma$ but selects action $\sigma'(s) = \alpha;\gamma$.
%Then in the induced Markov chain $\mdp'^{\sigma'}$, states $s_2$, $s_{11}$ and $s_{12}$ are reached in one step from $s$ with probabilities $p_1$, $p_1 q_{11}$ and $p_1 q_{12}$, respectively.
Then it is easy to see that the induced Markov chain $\mdp'^{\sigma'}$ is obtained from $\mdp^\sigma$ by applying Markov chain transition elimination to the transition $s \xrightarrow{p_1} s_1$.
Since $s_1$ was neither initial nor contained in the goal set $G$, this preserves reachability probabilities w.r.t.\ $G$.

Now let $\pcfp = \pcfpinit$ be a (binary, well-formed) PCFP.
Let $l_1 \in \Loc \setminus \{l_\iota\}$ be no potential goal with respect to goal predicate $\goalpred$ over $\Var$ and let $\pcfp'$ be obtained from $\pcfp$ by eliminating transition $l \xrightarrow{\guard \to p_1:\update_1} l_1$ according to the rule in \Cref{fig:trans_elim}.
We have to show that $\pcfp$ and $\pcfp'$ are reachability equivalent w.r.t.\ $\goalpred$.
To this end, we show that the semantic MDP $\mdp' := \mdp_{\pcfp'}$ can be obtained from $\mdp := \mdp_{\pcfp}$ by applying MDP transition elimination as in \Cref{fig:mdp-elim} repeatedly.

By \Cref{def:mdpsemantics} we have for all $\val \models \guard$, $\val' \models \altguard_1$ and $\val'' \models \altguard_2$ the following transitions $(\star)$ in the MDP $\mdp$:
\begin{align*}
    &\mdptrans{\conf{l}{\val}}{a_{\alpha}}{p_1}{\conf{l_1}{\update_1(\val)}} \\
    &\mdptrans{\conf{l}{\val}}{a_{\alpha}}{p_2}{\conf{l_2}{\update_2(\val)}} \\
    &\mdptrans{\conf{l_1}{\val'}}{a_{\gamma}}{q_{11}}{\conf{l_{11}}{v_{11}(\val')}} \tag{$\star$}\\
    &\mdptrans{\conf{l_1}{\val'}}{a_{\gamma}}{q_{12}}{\conf{l_{12}}{v_{12}(\val')}} \\
    &\mdptrans{\conf{l_1}{\val''}}{a_{\delta}}{q_{21}}{\conf{l_{21}}{v_{21}(\val'')}} \\
    &\mdptrans{\conf{l_1}{\val''}}{a_{\delta}}{q_{22}}{\conf{l_{22}}{v_{22}(\val'')}}~.
\end{align*}
%and there do not exist any other transitions starting in states of the form $\conf{l}{\val}$ or $\conf{l_1}{\val}$ for some $\val \in \dom$.
Here, $\alpha$, $\beta$, and $\delta$ refer to the commands containing transitions $l \xrightarrow{\guard \to p_1:\update_1} l_1$, $l_1 \xrightarrow{\altguard_1 \to q_{11}:v_{11}} l_{11}$, and $l_1 \xrightarrow{\altguard_2 \to q_{21}:v_{21}} l_{21}$, respectively.

On the other hand, again by \Cref{def:mdpsemantics}, in $\mdp'$ for all $\val \models \guard \land \wp(\update_1, \altguard_1)$ and $\val' \models \guard \land \wp(\update_1, \altguard_2)$ there exist the transitions $(\star\star)$
\begin{align*}
    &\mdptrans{\conf{l}{\val}}{a_{\alpha;\gamma}}{p_1q_{11}}{\conf{l_{11}}{(\update_1 \seq v_{11})(\val)}} \\
    &\mdptrans{\conf{l}{\val}}{a_{\alpha;\gamma}}{p_1q_{12}}{\conf{l_{12}}{(\update_1 \seq v_{12})(\val)}} \\
    &\mdptrans{\conf{l}{\val}}{a_{\alpha;\gamma}}{p_2}{\conf{l_2}{\update_2 (\val)}} \tag{$\star\star$}\\
    &\mdptrans{\conf{l}{\val'}}{a_{\alpha;\delta}}{p_1q_{21}}{\conf{l_{21}}{(\update_1 \seq v_{21})(\val')}} \\
    &\mdptrans{\conf{l}{\val'}}{a_{\alpha;\delta}}{p_1q_{22}}{\conf{l_{22}}{(\update_1 \seq v_{22})(\val')}} \\
    &\mdptrans{\conf{l}{\val'}}{a_{\alpha;\delta}}{p_2}{\conf{l_2}{\update_2 (\val')}} ~.
\end{align*}
%and there do not exist any other transitions starting in states of the form $\conf{l}{\val}$ for some $\val \in \dom$.
Here, $\alpha;\gamma$ and $\alpha;\delta$ denote the two commands available at $l$ in $\pcfp'$.
We claim that all the transitions $(\star\star)$ in $\mdp'$ are obtained from the transitions $(\star)$ in $\mdp$ by applying the rule in \Cref{fig:mdp-elim}.
To see this, note that the condition $\val \models \guard \land \wp(\update_1, \altguard_1)$ holds iff $\val \models \guard$ and $\update_1(\val) \models \altguard_1$.
Therefore, for all $\val$ and $ i \in \{1,2\}$ we have that
\begin{align*}
    &\mdptrans{\conf{l}{\val}}{a_{\alpha;\gamma}}{p_1q_{1i}}{\conf{l_{1i}}{(\update_1 \seq v_{1i})(\val)}} \tag{is a transition in $\mdp'$} \\
    \iff & \mdptrans{\conf{l}{\val}}{a_{\alpha}}{p_1}{\conf{l_1}{\update_1(\val)}} \\
    \text{and } & \mdptrans{\conf{l_1}{\update_1(\val)}}{a_{\gamma}}{q_{1i}}{\conf{l_{1i}}{v_{1i}(\update_1(\val))}} \tag{are transitions in $\mdp$} ~.
\end{align*}
Similarly, the condition $\val \models \guard \land \wp(\update_1, \altguard_2)$ holds iff $\val \models \guard$ and $\update_1(\val) \models \altguard_2$.
Therefore, for all $\val$ and $ i \in \{1,2\}$ we have that
\begin{align*}
    &\mdptrans{\conf{l}{\val'}}{a_{\alpha;\delta}}{p_1q_{2i}}{\conf{l_{2i}}{(\update_1 \seq v_{2i})(\val')}} \tag{is a transition in $\mdp'$} \\
    \iff & \mdptrans{\conf{l}{\val}}{a_{\alpha}}{p_1}{\conf{l_1}{\update_1(\val)}} \\
    \text{and } & \mdptrans{\conf{l_1}{\update_1(\val)}}{a_{\gamma}}{q_{2i}}{\conf{l_{2i}}{v_{2i}(\update_1(\val))}} \tag{are transitions in $\mdp$} ~.
\end{align*}
We now show that the transitions $(\star\star)$ in $\mdp'$ can be constructed from the transitions $(\star)$ by MDP transition elimination.
To this end, we make the following case distinction for all $\val \in \dom$:
\begin{itemize}
    \item  $\val \models \guard \land \wp(\update_1, \altguard_1)$ and $\val \models \guard \land \wp(\update_1, \altguard_2)$.
    In this case, by the previous observation we can apply the MDP transition elimination rule as in \Cref{fig:mdp-elim} with
    \begin{itemize}
        \item $s = \conf{l}{\val}$,
        \item $s_1 = \conf{l_1}{\update_1(\val)}$, $s_2 = \conf{l_2}{\update_2(\val)}$,
        \item $s_{11} = \conf{l_{11}}{v_{11}(\update_1(\val))}$, $s_{12} = \conf{l_{12}}{v_{12}(\update_1(\val))}$
        \item $s_{21} = \conf{l_{21}}{v_{21}(\update_1(\val))}$, $s_{22} = \conf{l_{22}}{v_{22}(\update_1(\val))}$, and
        \item $\alpha = a_{\alpha}$, $\gamma = a_{\gamma}$, $\delta = a_{\delta}$
    \end{itemize}
    to obtain the desired transitions $\mdptrans{\conf{l}{\val}}{a_{\alpha;\gamma}}{p_1q_{1i}}{\conf{l_{1i}}{(\update_1 \seq v_{1i})(\val)}}$ and $\mdptrans{\conf{l}{\val'}}{a_{\alpha;\delta}}{p_1q_{2i}}{\conf{l_{2i}}{(\update_1 \seq v_{2i})(\val')}}$, $i \in \{1,2\}$ in $\mdp'$.
    \item  $\val \models \guard \land \wp(\update_1, \altguard_1)$ and $\val \nvDash \guard \land \wp(\update_1, \altguard_2)$.
    In this case, we apply the MDP transition elimination rule just ``partially'' as follows:
    \begin{itemize}
        \item $s = \conf{l}{\val}$,
        \item $s_1 = \conf{l_1}{\update_1(\val)}$, $s_2 = \conf{l_2}{\update_2(\val)}$,
        \item $s_{11} = \conf{l_{11}}{v_{11}(\update_1(\val))}$, $s_{12} = \conf{l_{12}}{v_{12}(\update_1(\val))}$,  and
        \item $\alpha = a_{\alpha}$, $\gamma = a_{\gamma}$,
    \end{itemize}
    i.e., at $s_2$ there is just a single action $\gamma$ available, but the other action $\delta$ as in \Cref{fig:mdp-elim} is not present.
    \item  The case $\val \models \guard \land \wp(\update_1, \altguard_2)$ and $\val \nvDash \guard \land \wp(\update_1, \altguard_1)$ is symmetric to the previous case.
    \item For the remaining case $\val \nvDash \guard \land \wp(\update_1, \altguard_1)$ and $\val \nvDash \guard \land \wp(\update_1, \altguard_2)$ there is nothing to show.
\end{itemize}

Overall, the claim follows as the MDP transition elimination rule preserves reachability probabilities and because $\conf{l_1}{\val} \notin G_{\goalpred}$ for all $\val \in \dom$ as $l_1$ is no potential goal w.r.t.\ $\goalpred$ by assumption.

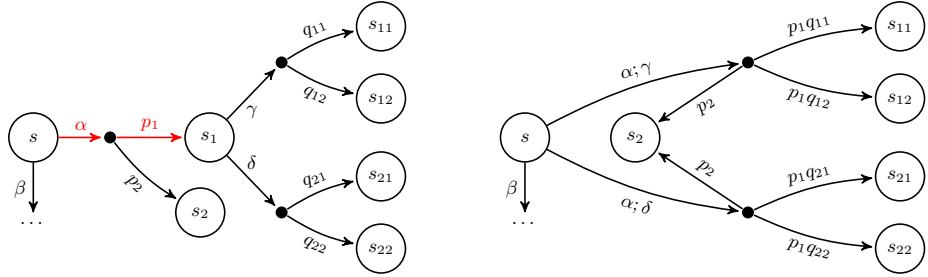
\begin{figure}[t]
	\centering
	\begin{tikzpicture}[myArrowStyle, node distance = 7mm and 9mm, every node/.style={scale=0.8}]
		\node[state] (lhat) {$s$};
		\node[below =of lhat] (dots) {$\dots$};
		\node[circle, inner sep=2pt, right=6mm of lhat, fill=black] (phi1dot) {};
		\node[state, right=of phi1dot] (l1) {$s_1$};
		\node[state, below right=of phi1dot] (l2) {$s_2$};
		\node[circle, inner sep=2pt, above right=of l1, xshift=-3mm, fill=black] (psi1dot) {};
		\node[circle, inner sep=2pt, below right=of l1, xshift=-3mm, fill=black] (psi2dot) {};
		\node[state,  right=of psi1dot, yshift=6mm] (l11) {$s_{11}$};
		\node[state,  right=of psi1dot, yshift=-6mm] (l12) {$s_{12}$};
		\node[state,  right=of psi2dot, yshift=6mm] (l21) {$s_{21}$};
		\node[state,  right=of psi2dot, yshift=-6mm] (l22) {$s_{22}$};
		
		\draw[->,red] (lhat) edge node[above] {$\alpha$} (phi1dot);
		\draw[->] (lhat) edge node[left] {$\beta$} (dots);
		
		\draw[->,red] (phi1dot) edge node[above] {$p_1$} (l1);
		\draw[->] (phi1dot) edge[bend right=10] node[below, sloped] {$p_2$} (l2);
		
		\draw[->] (l1) edge node[below,inner sep=2mm] {$\gamma$} (psi1dot);
		\draw[->] (l1) edge node[above,inner sep=2mm] {$\delta$} (psi2dot);
		
		\draw[->] (psi1dot) edge[bend left=10] node[above,sloped] {$q_{11}$} (l11);
		\draw[->] (psi1dot) edge[bend right=10] node[below,sloped] {$q_{12}$} (l12);
		\draw[->] (psi2dot) edge[bend left=10] node[above,sloped] {$q_{21}$} (l21);
		\draw[->] (psi2dot) edge[bend right=10] node[below,sloped] {$q_{22}$} (l22);
	\end{tikzpicture}
	\hfill
	\begin{tikzpicture}[myArrowStyle, node distance = 7mm and 12mm, , every node/.style={scale=0.8}]
		\node[state] (lhat) {$s$};
		\node[below =of lhat] (dots) {$\dots$};
		
		\node[state, right=8mm of lhat] (l2) {$s_2$};
	
		\node[circle, inner sep=2pt, above right=of l2, fill=black] (psi1dot) {};
		\node[circle, inner sep=2pt, below right=of l2, fill=black] (psi2dot) {};
		\node[state,  right=16mm of psi1dot, yshift=6mm] (l11) {$s_{11}$};
		\node[state,  right=16mm of psi1dot, yshift=-6mm] (l12) {$s_{12}$};
		\node[state,  right=16mm of psi2dot, yshift=6mm] (l21) {$s_{21}$};
		\node[state,  right=16mm of psi2dot, yshift=-6mm] (l22) {$s_{22}$};
		
		\draw[->] (lhat) edge node[left] {$\beta$} (dots);
		
		\draw[->] (lhat) edge[bend left=10] node[above,sloped] {$\alpha;\gamma$} (psi1dot);
		\draw[->] (lhat) edge[bend right=10] node[below,sloped] {$\alpha;\delta$} (psi2dot);
		
		\draw[->] (psi1dot) edge node[below,sloped] {$p_2$} (l2);
		\draw[->] (psi2dot) edge node[above,sloped] {$p_2$} (l2);

		\draw[->] (psi1dot) edge[bend left=10] node[above,sloped] {$p_1 q_{11}$} (l11);
		\draw[->] (psi1dot) edge[bend right=10] node[below,sloped] {$p_1 q_{12}$} (l12);
		\draw[->] (psi2dot) edge[bend left=10] node[above,sloped] {$p_1 q_{21}$} (l21);
		\draw[->] (psi2dot) edge[bend right=10] node[below,sloped] {$p_1 q_{22}$} (l22);
	\end{tikzpicture}
	\caption{Transition elimination in binary MDPs. The rule preserves reachability probabilities provided that $s_1$ is neither initial nor a goal state. The transformation also works if there is just one action available at $s_1$.}
	\label{fig:mdp-elim}
\end{figure}

\qedProof

\subsection{Proof of \Cref{thm:complexity} (Complexity of Location Elimination)}
\label{proof:complexity}

We restate the theorem for convenience:

\thmcomplexity*

\noindent We now prove the theorem by discussing each item individually:
\begin{itemize}
    \item 
    Let $\gamma_1,\ldots,\gamma_n$ be the $n$ distinct commands in $\Cmd$ that have a transition leading to $l$, i.e., for $i = 1,\ldots,n$ we have at least one transition of the form
    \[
        \pcfptrans{l_i}{\guard_i}{p_i}{\update_i}{l}
    \]
    contained in $\gamma_i$ .
    Moreover, the multiplicity of these transitions is at most $m$ by assumption.
    We \emph{process} each of these commands as follows:
    We apply transition elimination to an arbitrary $\gamma_1$-transition first and then to all \emph{new} ingoing transitions to $l$ created by this\footnote{See the paragraph above \Cref{thm:locelim} for an explanation why transition elimination may create new ingoing transitions; recall that this is impossible in the case of Markov chain transition elimination. Also recall that new transitions may only be created if the multiplicity of the transition to be eliminated is greater than 1.}.
    We iterate this until no new ingoing transitions are created.
    After that, we process the other commands $\gamma_2,\ldots,\gamma_n$.
    
    The number $I(m)$ of iterations of the algorithm in \Cref{thm:locelim} (which is equal to the number of times we apply transition elimination) for processing a single command with transitions leading to $l$ with multiplicity $m$  satisfies the recurrence $I(1)=1$ and $I(m)= 1 + kI(m-1)$ for all $m > 1$ since eliminating a transition with multiplicity $m > 1$ yields $k$ new commands with multiplicity $m-1$ each.
    The solution of this recurrence is $I(m) = \sum_{i=0}^{m-1} k^i = (k^m {-}1)/(k{-}1)$.
    Thus, to process all $n$ commands, $n(k^m {-}1)/(k{-}1)$ iterations suffice.
    \item
    As in the previous item, we process each command $\gamma_1,\ldots,\gamma_n$ one after another.
    We may think of the following stages when processing one such command:
    In stage 1, there is a the single command $\gamma_i$ with multiplicity $m$.
    In stage $j$ for $j > 1$, the commands from the previous stage are transformed into $k$ new commands with multiplicity $m-j+1$ each.
    In the final stage $m$, there are thus $k^{m-1}$ commands with multiplicity $1$ each.
    Eliminating all of them yields $k \cdot k^{m-1} = k^m$ new commands, but no new commands with a transition leading to $l$.
    Hence, the algorithm creates a total of at most $nk^m$ new commands after processing all $n$ commands $\gamma_1,\ldots,\gamma_n$.
    \item
    We first give an example for $n=1$ and $k=2$.
    Consider the deterministic PCFP $\pcfp$ with $\Var = \{x_i, y_i \mid 1 \leq i \leq m\}$ and $\dom(x_i) = \dom(y_i) = \{0,1\}$ depicted in \Cref{fig:exponential}.
    Let $\pcfp'$ be the result of applying the location elimination algorithm from \Cref{thm:locelim} to the loop-free location $l$ of $\pcfp$.
    As eliminating $l$ with our rule preserves reachability, in the Markov chain $\mdp_{\pcfp'}$ it holds that
    \[
    	P(\langle l', \val \rangle, \langle l_1, \vec{0} \rangle) = c \sum_{i} \delta_{1, \val(x_i)} \frac{1}{2^i}
    \]
    for all $\val \in \dom$ with $\val(y_i)=0$ for all $i$ and where $\delta$ is the Kronecker-Delta.
    Note that all these $2^m$ probabilities are pairwise distinct:
    In each case, the probability is equal to the binary decimal $0.\val(x_1)\ldots\val(x_m)$ multiplied by the normalizing constant $c$.
    
    This implies that $\pcfp'$ must have at least $2^m$ commands available at $l'$ as otherwise there would be at most $2^m - 1$ pairwise distinct probabilities $P(\langle l', \val \rangle, \langle l_1, \vec{0} \rangle)$ in $\mdp_{{\pcfp'}}$, where $\val \in \dom$ with $\val(y_i)=0$ for all $i=1,\ldots,m$.
    Moreover, it is clear that these commands have satisfiable guards and are pairwise distinct.
    
    The example can be extended to $n > 1$ as well:
    We simply make $n$ copies of the location $l'$.
    
    It is less obvious how to adapt the example to $k > 2$, and we shall content ourselves with the exponential lower bound that is already implied by the $k=2$ case.
\end{itemize}

\begin{figure}[h]
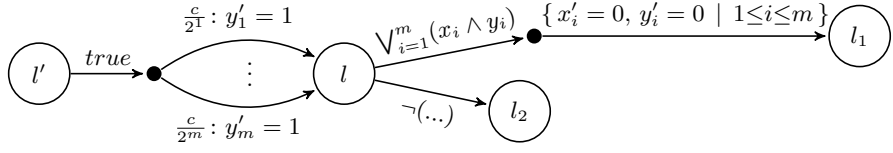

    \centering
    \figexponential
    \caption{The PCFP $\pcfp$ used in the proof of \Cref{thm:complexity}.
    The transitions from $l'$ to $l$ have multiplicity $m$ each.
    Variables $x,y$ have Boolean domain, $\wedge$ denotes logical conjunction, and $c$ is a normalizing constant.}
    \label{fig:exponentialapp}
\end{figure}

\qedProof

\section{Eliminating Self-Loops}
\label{app:selfloops}

Analyzing loops is notoriously difficult---even in non-probabilistic programs---and usually boils down to finding loop-invariants. The general idea of this paper is to fall back to further variable unfolding (\Cref{sec:unfolding}) if no location without self-loops exists. However, in several special cases, we can eliminate self-loops.

First, we observe that there are ``lucky cases'' where the transition elimination rule is sufficient even for locations with self-loops:
Assume that location $l$ has a self-loop and an ingoing transition from source location $\hat{l} \neq l$, to which we apply transition elimination. Our rule then yields (among others) transitions of the form
\[\hat{l} \,\xrightarrow{\guard \wedge \wp(u, \altguard) \,\to\, pq:\, \altupdate\seq\update}\, l\]
where $l$ remains a target due to its self-loop.
However, it is possible that the guards $\guard \wedge \wp(u, \altguard)$ in these transitions are all unsatisfiable.
%---location $\texttt{f}$ in the coin game is an example of this case (Fig.~\ref{fig:coingame}).
Likewise, transition elimination can be applied directly to self-loops to eliminate similar lucky cases.

We develop one further loop elimination rule.
As a first observation, suppose that location $l$ has a self-loop $l \xrightarrow{\guard\to p:\update}l$ with $u = \nop$ (an effectless update) and $p < 1$.
Then the probability $p$ can be redistributed over the remaining probabilistic choices in the corresponding command by multiplying their respective probabilities with $(1-p)^{-1}$, just like in Markov chain self-loop elimination (\Cref{fig:mc_state_elim}, left).
Now suppose that $\update$ is \emph{idempotent}, that is $\update(\update(\val)) = \update(\val)$ for all variable valuations $\val$.
For instance, the update $x'=y$ is idempotent, but the update $x'=x+1$ is not.
In this case, we can also formulate an elimination rule:

\begin{proposition}
	\label{prop:idempotent_correct}
	Let $\pcfp$ be a PCFP with self-loop $l \xrightarrow{\guard\to p:\update}l$ where $\update$ is idempotent and $p < 1$.
    Further, suppose that $l$ is no potential goal w.r.t.\ goal predicate $\goalpred$ \emph{or} that $\wp(u_1, \goalpred)$ is unsat. Let $\pcfp'$ be the resulting PCFP after applying the rule in \Cref{fig:idempotent}. Then $\pcfp$ and $\pcfp'$ are $\goalpred$-reachability equivalent.
\end{proposition}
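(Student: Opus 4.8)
The plan is to argue, exactly as in the proof of Lemma~\ref{lem:trans_elim_correct}, entirely on the level of the semantic model $\mdp_\pcfp$: first treat the deterministic case (where $\mdp_\pcfp$ is a Markov chain) and then lift to the non-deterministic case by matching schedulers. The rule of Fig.~\ref{fig:idempotent} is the symbolic incarnation of ordinary Markov-chain self-loop elimination (Fig.~\ref{fig:mc_state_elim}, left), so the heart of the argument is to show that, once unfolded into states, the two coincide.

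First I would pin down which states of $\mdp_\pcfp$ actually carry a \emph{genuine} self-loop due to the command $\gamma\colon l,\guard \to p\colon\update\colon l + \sum_i q_i\colon\update_i\colon l_i$. At a state $\langle l,\val\rangle$ with $\val\models\guard$, the self-loop branch leads to $\langle l,\update(\val)\rangle$. The decisive point is idempotence: since $\update(\update(\val))=\update(\val)$, once $\update$ has fired the valuation is a fixed point, so repeated self-looping from $\langle l,\val\rangle$ visits only the two valuations $\val$ and $\update(\val)$, and $\langle l,\update(\val)\rangle$ has a true self-loop in $\mdp_\pcfp$ precisely when $\guard$ still holds there, i.e.\ when $\val\models\wp(\update,\guard)$. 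Hence the command can re-enter its own loop only at fixed points of $\update$. This is exactly the structure the case split on $\wp(\update,\guard)$ in Fig.~\ref{fig:idempotent} captures, and on the fixed-point states the rule reduces to multiplying the sibling probabilities $q_i$ by $(1-p)^{-1}$, matching Fig.~\ref{fig:mc_state_elim}.

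Next I would verify that this rescaling preserves $\prob(\reach G_\goalpred)$. On each fixed-point state the self-loop is taken a geometric number of times before exiting, and $\sum_{k\ge 0}p^k(1-p)=1$ shows that redistributing the mass $p$ over the remaining branches yields exactly the same exit distribution, hence the same reachability probabilities to all states outside the loop. The remaining worry is that collapsing the loop silently absorbs the intermediate valuation $\update(\val)$ at $l$; a path that reaches a goal \emph{at} $\langle l,\update(\val)\rangle$ in $\pcfp$ must still be accounted for in $\pcfp'$. This is precisely where the hypothesis ``$l$ is no potential goal \emph{or} $\wp(\update,\goalpred)$ is unsatisfiable'' enters: the former excludes any goal state at $l$ at all, while the latter guarantees $\update(\val)\not\models\goalpred$ for every $\val$, so the absorbed valuation is never a goal. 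In either case the collapse neither creates nor destroys a goal-reaching path.

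The main obstacle I expect is the bookkeeping forced by whether the guard survives the update: the geometric rescaling is sound only at valuations $\val\models\guard\wedge\wp(\update,\guard)$, where the loop can genuinely be re-taken, whereas at valuations $\val\models\guard\wedge\neg\wp(\update,\guard)$ the self-loop branch fires at most once and lands in a state where $\gamma$ is disabled; these two regimes must be separated by the $\wp$-conditioned guards and argued correct independently. Idempotence is exactly what keeps this dichotomy clean---without it, $\update$ could drift the valuation indefinitely and the loop would not reduce to a simple geometric escape. Finally, for non-deterministic $\pcfp$ I would, as in Lemma~\ref{lem:trans_elim_correct}, exhibit a value-preserving correspondence between the (memoryless, deterministic) schedulers of $\mdp_\pcfp$ and $\mdp_{\pcfp'}$, so that the optimal reachability probabilities agree.
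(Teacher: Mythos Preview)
Your approach is viable but takes a genuinely different route from the paper. You unfold the rule of Fig.~\ref{fig:idempotent} into the semantic Markov chain and verify, state by state, that it coincides with ordinary self-loop elimination at fixed points of $\update_1$ plus a one-step look-ahead elsewhere. The paper instead stays entirely at the PCFP level and reduces the claim to machinery already established: it inserts a fresh auxiliary location $\hat{l}$ via $l \xrightarrow{\guard \to p_1 : \update_1} \hat{l}$, where $\hat{l}$ is a copy of $l$ whose own self-loop carries the update $\nop$ (idempotence is invoked exactly here---once $\update_1$ has fired, firing it again is $\nop$). The $\nop$ self-loop at $\hat{l}$ is then removed by the plain probability-redistribution rule, and finally transition elimination (Lemma~\ref{lem:trans_elim_correct}) is applied to $l \xrightarrow{\guard \to p_1 : \update_1} \hat{l}$, which mechanically produces the two $\wp$-guarded commands of Fig.~\ref{fig:idempotent}. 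The benefit is that correctness, the goal-related side condition, and the non-deterministic case are all inherited from Lemma~\ref{lem:trans_elim_correct} without further argument, whereas your route re-derives each of these at the semantic level. One small inaccuracy in your description: you say the rule ``reduces to multiplying the sibling probabilities $q_i$ by $(1-p)^{-1}$'', but Fig.~\ref{fig:idempotent} contains no such factor---$p_1,p_2$ are kept verbatim and the self-loop branch is \emph{redirected} to $l'$ rather than rescaled (the rescaling is implicit, since under $\guard_2$ both branches reach $l'$ with total mass $p_1+p_2=1=p_2/(1-p_1)$). This does not break your plan, but when you carry out the state-by-state comparison you must match the rule's actual syntax, not a rescaled variant of it.
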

\begin{proof}
    This rule is seen to be correct by introducing a ``temporal location'' $\hat{l}$ with $l \xrightarrow{\guard\to p_1:\update_1} \hat{l}$ and that is otherwise like $l$ except $\hat{l} \xrightarrow{\guard\to p_1:\nop} \hat{l}$ is a $\nop$ self-loop.
    The intuitive meaning of $\hat{l}$ is that it encodes the state that results from applying $\update_1$ in location $l$.
    Since $\update_1$ is idempotent, $\hat{l}$ has a $\nop$ self-loop that can be immediately eliminated by redistributing the probability $p_1$ over the other choices.
    Applying the transition elimination rule (\Cref{lem:trans_elim_correct}) to $l \xrightarrow{\guard\to p_1 \update_1} \hat{l}$ then yields the result as claimed.
    \qedProof
\end{proof}
Notice though that this rule only effectively removes a self-loop if the four locations in \Cref{fig:idempotent} are pairwise distinct as it otherwise introduces new self-loops.

\begin{figure}[t]
	\centering
	\begin{tikzpicture}[myArrowStyle, every node/.style={scale=0.8}]
		\node[state] (l) {$l$};
		\node[circle, inner sep=2pt, left= 6mm of l, fill=black] (phidot) {};
		\node[circle, inner sep=2pt, right= 6mm of l, fill=black] (psidot) {};
		\node[state, left=of phidot] (l') {$l'$};
		\node[state, above=6 mm of psidot, xshift=10mm] (l1) {$l_1$};
		\node[state, below=6 mm of psidot, xshift=10mm] (l2) {$l_2$};
		
		\draw[->] (l) --node[below] {$\guard$} (phidot);
		\draw[->, red] (phidot) edge[bend left=45] node[above] {$p_1: \update_1$} (l);
		\draw[->] (phidot) -- node[above] {$p_2: \update_2$} (l');
		\draw[->] (l) --node[below] {$\altguard$} (psidot);
		\draw[->] (psidot) --node[above,sloped] {$q_1:v_1$} (l1);
		\draw[->] (psidot) --node[below,sloped] {$q_2:v_2$} (l2);
	
	\node[state, right= 55mm of l] (l) {$l$};
	\node[circle, inner sep=2pt, left= 6mm of l, fill=black] (phidot) {};
	\node[circle, inner sep=2pt, above= of phidot, fill=black] (phidotnew) {};
	\node[circle, inner sep=2pt, right= 6mm of l, fill=black] (psidot) {};
	\node[state, left=of phidot] (l') {$l'$};
	\node[state, above=6 mm of psidot, xshift=10mm] (l1) {$l_1$};
	\node[state, below=6 mm of psidot, xshift=10mm] (l2) {$l_2$};
	
	\draw[->] (l) --node[below, near start] {$\guard_1$} (phidot);
	\draw[->] (l) --node[right, near start] {$\guard_2$} (phidotnew);
	\draw[->] (phidotnew) edge[bend right] node[above, sloped] {$p_1: \update_1\seq \update_2$} (l');
	\draw[->] (phidotnew) edge[bend left] node[above, sloped] {$p_2: \update_2$} (l');
	\draw[->] (phidot) -- node[below] {$p_2: \update_2$} (l');
	\draw[->] (l) --node[below] {$\altguard$} (psidot);
	\draw[->] (psidot) --node[above,sloped] {$q_1:v_1$} (l1);
	\draw[->] (psidot) --node[below,sloped] {$q_2:v_2$} (l2);
	\draw[->] (phidot) edge[bend left] node[above, near end] {$p_1q_1:\update_1\seq v_1$} (l1);
	\draw[->] (phidot) edge[bend right] node[below,near end] {$p_1q_2:\update_1\seq v_2$} (l2);
	\end{tikzpicture}
	\caption{Elimination rule for idempotent self-loops. Self-loop $l \xrightarrow{p_1:\update_1} l$ is eliminated. In the figure, $\guard_1 := \guard \wedge \wp(\update_1, \altguard)$ and $\guard_2 := \guard \wedge \wp(\update_1, \guard)$}
	\label{fig:idempotent}
\end{figure}

\section{Benchmarks Details}
\label{app:benchmarks}

In the following list, the ``short descriptions'' formatted as quotes are literal quotes from the previously listed references.

\begin{description}
    
    \item[\benchmark{brp}] Bounded retransmission protocol
    \begin{itemize}
        \item From: \cite{brp}, \prism\ benchmark suite, QComp~\cite{qcomp} benchmark set
        \item Short description: \emph{``The BRP protocol sends a file in a number of chunks, but allows only a bounded number of retransmissions of each chunk.''}
        \item Verified property: \texttt{P=? [F s=5]}
        \item Parameters: \texttt{N} (positive integer): number of chunks in a file, \texttt{MAX}: (positive integer) maximum number of retransmissions.
    \end{itemize}

    \item[\benchmark{coingame}] Coin game used as running example in this paper
    \begin{itemize}
        \item From: this paper
        \item Short description: See \Cref{sec:example}.
        \item Verified property: \texttt{ P=? [F (x>= N) \& (f=false) ]}
        \item Parameters: \texttt{N} number of rounds.
    \end{itemize}

     \item[\benchmark{dice5}] Rolling several dice in parallel
    \begin{itemize}
        \item From: Example shipped with \storm.
        \item Short description: This benchmark models rolling several dice, five in this case, in parallel. The individual dice are themselves simulated by coin flips similar to the Knuth-Yao die.
        \item Verified property: \texttt{Pmax=? [F s1=7 \& s2=7 \& s3=7 \& s4=7 \& s5=7 \& d1+d2+d3+d4+d5=15]}
        \item Parameters: n/a
    \end{itemize}

    \item[\benchmark{eajs}] Energy-aware job scheduling
    \begin{itemize}
        \item From: \cite{eajs}, QComp~\cite{qcomp} benchmark set
        \item Short description: \emph{``A system of N processes which need to enter a critical section in order to perform tasks, each within a given deadline. Access to the critical section is exclusively granted by a scheduler, which selects processes only if they have requested to enter.''}
        \item Verified property: \texttt{R\{"utilityLocal"\}min=? [F localFailure]}
        \item Parameters: \texttt{energy\_capacity}	(positive integer):	The amount of available energy.
    \end{itemize}

    \item[\benchmark{grid}] Partially observable grid world
    \begin{itemize}
        \item From: \cite{paynt}
        \item Short description: Models a robot moving in a partially observable grid world.
        \item Verified property: \texttt{P=? [F (o=2) ]}
        \item Parameters: \texttt{CMAX} (positive integer): maximum counter value.
        \item Remarks: This is a random instance of the original template benchmark from \cite{paynt}.
    \end{itemize}

    \item[\benchmark{hospital}] Hospital inventory management
    \begin{itemize}
        \item From: \cite{hospital}
        \item Short description: \emph{``[The model represents] daily drug ordering in a ward of an Italian public hospital, where patient admission/discharge and drug consumption during the sojourn are subject to uncertainty.''}
        \item Verified property: \texttt{Pmax=? [ F s=7 ]}
        \item Parameters: n/a
        \item Remarks: We have extended the planning horizon to 6 weeks and used random probabilities for the daily drug consumption.
    \end{itemize}

    \item[\benchmark{nand}] von Neumann NAND multiplexing system 
    \begin{itemize}
        \item From: \cite{nand}, \prism\ benchmark suite, QComp~\cite{qcomp} benchmark set
        \item Short description: \emph{``The case study concerns NAND multiplexing, a technique for constructing reliable computation from unreliable devices.''}
        \item Verified property: \texttt{P=? [ F s=4 \& z/N<0.1 ]}
        \item Parameters: \texttt{N} (positive integer) number of inputs in each bundle, \texttt{K}: (positive integer) number of restorative stages       
    \end{itemize}

    \item[\benchmark{nd-nand}] MDP version of the previous benchmark
    \begin{itemize}
        \item From: \cite{nand}, \prism\ benchmark suite, QComp~\cite{qcomp} benchmark set
        \item Short description: (see above)
        \item Verified property: \texttt{P=? [ F s=4 \& z/N<0.1 ]}
        \item Parameters: \texttt{N} (positive integer) number of inputs in each bundle, \texttt{K}: (positive integer) number of restorative stages
        \item Remark: In the original \prism\ program, we have replaced the command
        \begin{align*}
            &\texttt{[] s=2 \& u>1 \& zy<(N-c) \& zy>0  ->}\\
            &\qquad\qquad\qquad\texttt{ p1 : <choice A> + p2: <choice B>} 
        \end{align*}
        by two commands to resolve the above probabilistic choice in a non-deterministic way.
    \end{itemize}

    \item[\benchmark{negotiation}] Alternating Offers Protocol
    \begin{itemize}
        \item From: \cite{negotiation}, \prism\ benchmark suite
        \item Short description: \emph{``This case study is about the analysis of a Negotiation Framework known as Rubinstein's Alternating Offers Protocol. In such a framework two agents, the Buyer (B) and the Seller (S), bargain over an item. ''}
        \item Verified property:\\ \texttt{P=? [F s=2 \& b=3 \& (bid=TIMELINE/2 | cbid=TIMELINE/2)]}
        \item Parameters: \texttt{TIMELINE} (positive integer).
    \end{itemize}

    \item[\benchmark{pole}] Balancing a pole
    \begin{itemize}
        \item From: \cite{paynt}
        \item Short description: Models balancing a pole in a noisy and unknown environment.
        \item Verified property: \texttt{R\{"rounds"\}=? [F x = 0 | x = MAXX ]}
        \item Parameters: \texttt{CMAX} (positive integer): maximum counter value.
        \item Remarks: This is a random instance of the original template benchmark from \cite{paynt}.
    \end{itemize}

    \item[\benchmark{tireworld}] Navigation of a vehicle
    \begin{itemize}
        \item From:  IPPC 2006 benchmark set, QComp~\cite{qcomp} benchmark set
        \item Short description: Navigation of a vehicle which can only recover from faults at specific service stations.
        \item Verified property: \texttt{Pmax=? [F var15 = 10]}
        \item Parameters: n/a
        \item Originally specified in PPDDL.
    \end{itemize}

\end{description}

\section{Experiments with fine-tuned Heuristics}
\label{app:additional-exp}

We have also encountered examples in the literature (see \Cref{table:resultsextended}) where our default heuristics does not lead to substantial reductions.
However, by increasing either the number of maximally permitted locations (as in \benchmark{tireworld}) or decreasing the allowed elimination complexity (as in \benchmark{negotitation}), we could nonetheless achieve noticeable reduction on these models, too.

Moreover, it is occasionally possible to improve performance by fine-tuning the heuristics, even if the default settings already yield good results. This is the case for, e.g., \benchmark{brp}.
\begin{table}[h]
    \centering
    \caption{
        Further experimental results with manually tuned benchmark settings.
        Recall that the default is 10, 10000.
    }
    \label{table:resultsextended}
    \begin{adjustbox}{max width=\textwidth}
        {\renewcommand{\arraystretch}{1.0}
            \setlength{\tabcolsep}{5pt}   
            \begin{tabular}{l r c  r c  r  r  r  r  r  r  r  r  r r r}
                \toprule
                \multirow{2}{*}{Name} &  \multirow{2}{*}{Type} & Prop. & Red. & Heuristics & \multirow{2}{*}{Params.} & \multicolumn{2}{c}{States} & \multicolumn{2}{c}{Transitions} & \multicolumn{2}{c}{Build time [ms]} & \multicolumn{2}{c}{Check time [ms]} & \multicolumn{2}{c}{Total time [ms]}  \\
                & & type & time & $L\,,\,T$ & &  orig. & red. & orig. & red. & orig. & red. & orig. & red. & orig. & red.  \\ \midrule
                \multirow{4}{*}{\benchmark{brp}} & \multirow{4}{*}{dtmc} & \multirow{4}{*}{\benchmark{P}} & \multirow{4}{*}{\benchmark{134}} & \multirow{4}{*}{\benchmark{10,10000}} & $2^{10}$/5 & 78.9\texttt{K} & -44\% & 106\texttt{K} & -33\% & 261 & -33\% & 22 & -38\% & \multirow{4}{*}{\benchmark{16,418}} & \multirow{4}{*}{\benchmark{-46\%}} \\
                & & & &  & $2^{11}$/10 & 291\texttt{K} & -45\% & 397\texttt{K} & -33\% & 1,027 & -39\% & 101 & -46\% & & \\
                & & & &  & $2^{12}$/20 & 1.11\texttt{M} & -46\% & 1.53\texttt{M} & -33\% & 3,945 & -48\% & 462 & -48\% & & \\
                & & & &  & $2^{13}$/25 & 2.76\texttt{M} & -46\% & 3.8\texttt{M} & -33\% & 9,413 & -47\% & 1,187 & -47\% & & \\ \midrule
                \multirow{4}{*}{\benchmark{brp}} & \multirow{4}{*}{dtmc} & \multirow{4}{*}{\benchmark{P}} & \multirow{4}{*}{\benchmark{705}} & \multirow{4}{*}{\benchmark{50,2500}} & $2^{10}$/5 & 78.9\texttt{K} & -75\% & 106\texttt{K} & -43\% & 270 & -39\% & 23 & -68\% & \multirow{4}{*}{\benchmark{16,590}} & \multirow{4}{*}{\benchmark{-55\%}} \\
                & & & &  & $2^{11}$/10 & 291\texttt{K} & -76\% & 397\texttt{K} & -43\% & 969 & -55\% & 99 & -72\% & & \\
                & & & &  & $2^{12}$/20 & 1.11\texttt{M} & -76\% & 1.53\texttt{M} & -42\% & 4,003 & -58\% & 450 & -72\% & & \\
                & & & &  & $2^{13}$/25 & 2.76\texttt{M} & -76\% & 3.8\texttt{M} & -42\% & 9,636 & -59\% & 1,140 & -71\% & & \\ \midrule
                \multirow{2}{*}{\benchmark{negotiation}} & \multirow{2}{*}{dtmc} & \multirow{2}{*}{\benchmark{P}} & \multirow{2}{*}{\benchmark{148}} & \multirow{2}{*}{\benchmark{10,1000}} & $10^4$ & 129\texttt{K} & -32\% & 184\texttt{K} & -26\% & 481 & -39\% & 22 & -49\% & \multirow{2}{*}{\benchmark{5,631}} & \multirow{2}{*}{\benchmark{-39\%}} \\
                & & & &  & $10^5$ & 1.29\texttt{M} & -32\% & 1.84\texttt{M} & -26\% & 4,930 & -43\% & 197 & -30\% & & \\ \midrule
                \multirow{1}{*}{\benchmark{tireworld}} & \multirow{1}{*}{mdp} & \multirow{1}{*}{\benchmark{P}} & \multirow{1}{*}{\benchmark{134}} & \multirow{1}{*}{\benchmark{55,10000}} & n/a & 197\texttt{K} & -24\% & 851\texttt{K} & +10\% & 923 & -21\% & 412 & -47\% & \multirow{1}{*}{\benchmark{1,335}} & \multirow{1}{*}{\benchmark{-19\%}} \\
                \bottomrule
            \end{tabular}
        }
    \end{adjustbox}
\end{table}
}

\end{document}